\newtheoremstyle{mythm}% name
{\topsep}   % Space above1
{\topsep}   % Space below1
{\itshape}      % Body font
{0pt}       % Indent amount2
{\bfseries} % Theorem head font
{:}         % Punctuation after theorem head
{5pt plus 1pt minus 1pt}    % Space after theorem head3
{\thmname{#1}\thmnumber{ #2}\thmnote{ (#3)}}
\theoremstyle{mythm}
\newtheorem{proposition}{Proposition}
\newcommand{\singlesize}{0.6} % for onecolumn
\newcommand{\bm}[1]{\boldsymbol{#1}}
\begin{document}

\title{Optimization-driven Hierarchical Learning Framework for Wireless Powered Backscatter-aided Relay Communications}
\author{Shimin Gong, Yuze Zou, Jing Xu, Dinh Thai Hoang, Bin Lyu, and Dusit Niyato
\thanks{Shimin Gong is with School of Intelligent Systems Engineering, Sun Yat-sen University, Guangzhou 510275, China (e-mail: gongshm5@mail.sysu.edu.cn). Yuze Zou and Jing Xu are with the School of Electronic Information and Communications, Huazhong University of Science and Technology, Wuhan 430074, China (email: \{zouyuze,xujing\}@hust.edu.cn). Dinh Thai Hoang is with the Faculty of Engineering and Information Technology, University of Technology Sydney, Ultimo NSW 2007, Australia (email: hoang.dinh@uts.edu.au). Bin Lyu is with Key Laboratory of Ministry of Education in Broadband Wireless Communication and Sensor Network Technology, Nanjing University of Posts and Telecommunications, Nanjing 210003, China. Dusit Niyato is with School of Computer Science and Engineering, Nanyang Technological University, Singapore (email: dniyato@ntu.edu.sg).}
}

\maketitle
\thispagestyle{empty}

\begin{abstract}
In this paper, we employ multiple wireless-powered relays to assist information transmission from a multi-antenna access point to a single-antenna receiver. The wireless relays can operate in either the passive mode via backscatter communications or the active mode via RF communications, depending on their channel conditions and energy states. We aim to maximize the overall throughput by jointly optimizing the access point's beamforming and the relays' radio modes and operating parameters. Due to the non-convex and combinatorial structure, we develop a novel optimization-driven hierarchical deep deterministic policy gradient (H-DDPG) approach to adapt the beamforming and relay strategies dynamically. The optimization-driven H-DDPG algorithm firstly decomposes the binary relay mode selection into the outer-loop deep $Q$-network (DQN) algorithm and then optimizes the continuous beamforming and relaying parameters by using the inner-loop DDPG algorithm. Secondly, to improve the learning efficiency, we integrate the model-based optimization into the DDPG framework by providing a better-informed target estimation for DNN training. Simulation results reveal that these two special designs ensure a more stable learning and achieve a higher reward performance, up to nearly 20\%, compared to the conventional DDPG approach.
\end{abstract}
\begin{IEEEkeywords}
Deep reinforcement learning, wireless backscatter communications, wireless power transfer, monotonic optimization, DDPG.
\end{IEEEkeywords}

\section{Introduction}
Backscatter communications technology has been recently proposed as a solution to improve the energy- and spectrum-efficiency of wireless networks~\cite{lxmag18,symbiotic}. The backscatter radio works in the passive mode by adapting the antenna's load impedance to reflect the incident or ambient RF signals. It is featured with extremely low power consumption and thus sustainable by the wireless power transfer~\cite{yangiot19,zhou19}. The low power consumption also comes with a price. Due to limited signal processing capability, the backscatter radio typically has a low data rate and are vulnerable to channel variations. Hence, it is more preferable to take advantages of both RF and backscatter communications in a hybrid radio network, where hybrid radios can switch between the active and passive modes according to their energy and channel conditions, e.g.,~\cite{hoang17cr} and~\cite{tccn19}.

In a hybrid radio network, the backscatter radios can be used as passive relays to assist RF communications~\cite{ieeenetwork}. Typically, there are two passive relaying strategies. The first case is similar to the conventional decode-and-forward (DF) relay scheme. The hybrid relay decodes the information in the first hop and then forwards it to the receiver in the second hop by backscattering or RF communications. For example, the authors in~\cite{lyb-arelay} and~\cite{lyb-arelay2} employed a gateway device to decode the backscattered information from the user device and then forward the information to the access point by RF communications. An opportunistic DF relay scheme is proposed in~\cite{twobds} where the relay node not only decodes and forwards the source information but also opportunistically transmits its own information by backscatter communications. The authors in~\cite{R5} studied the cooperation between an active radio and a backscatter radio to deliver information to the access point. The challenge in this case lies in that the relay's signal decoding requires high power consumption. This may hinder the relay to join relaying communications.

The other case is similar to the conventional amplify-and-forward (AF) scheme, where the passive relay instantly reflects the incident RF signals. As such, the RF channel between the transceivers can be enhanced in favor of RF communications by optimizing the relays' reflection coefficients, e.g.,~\cite{symbiotic,yang18,tccn19}. Moreover, the relay does not need to decode information and hence has very little power consumption. This will encourage cooperative relay communications. Focusing on such an AF-alike passive relay scheme, the bit-error-rate (BER) performance was analyzed in~\cite{zhouber19} and~\cite{zhou20} and compared with the DF-alike relay scheme, revealing that both schemes perform similarly under ideal conditions while the DF-alike scheme becomes worse off in more practical conditions. The authors in~\cite{luxiao19} proposed to optimize the mode switching between the AF-alike passive relay scheme and the conventional DF scheme to improve the transmission success probability and ergodic capacity. To maximize the transmission performance in a hybrid radio network, the above-mentioned works generally formulate a joint optimization problem, typically involving the radio's mode selection, the choice of complex reflection coefficient, the energy harvesting parameters, transmission scheduling, power allocation, energy and information beamforming strategies.

In this paper, we focus on the energy- and spectrum-efficient passive relay communications in the second case. Different from prior studies in~\cite{yang18,zhouber19,zhou20,luxiao19}, we consider a more challenging scenario where multiple hybrid radios jointly assist the RF communications from a hybrid access point (HAP) to an active receiver. Each radio can choose between the active and passive mode independently. Hence, we expect that the RF communications between the active transceivers will be assisted by both the active and passive relays simultaneously~\cite{gc19xie}. We aim to improve the throughput performance by exploiting both the relays' radio diversity gain and multi-user cooperation gain. A similar hybrid relay network has been studied in our previous work in~\cite{iot20}, where the relays' mode selection is approximately optimized by a set of heuristic algorithms to improve the overall relay performance. The difficulty of the optimization problem firstly lies in the combinatorial structure due to the relays' mode selection. Even with the fixed relay mode, the joint optimization of beamforming and relaying strategies is still challenged due to complicated couplings among multiple relays. Compared with optimization methods, the machine learning (ML) approaches have shown better flexibility and robustness to address complex problems with imprecise modeling, uncertain dynamics, and high-dimensional variables. For example, the authors in~\cite{dsa-rl} proposed a Markov Decision Process (MDP) and applied an online reinforcement learning (RL) method to learn the radio's spectrum access decision in an ambient backscatter system, considering the dynamics in the channel condition, energy storage, and traffic demand. Deep reinforcement learning (DRL) has been applied in~\cite{jam-hoang} and~\cite{rl-jam} to defend the hybrid radio against a smart jammer. ML methods have also been used for performance maximization in hybrid radio networks, e.g.,~by optimizing transmission scheduling~\cite{drl-time}, power allocation~\cite{q-power}, and radio mode selection~\cite{iccc-ddpg}. However, the above-mentioned applications of ML methods are still unsatisfactory in practice, mainly due to the demand for a large data set for offline training and a slow convergence speed in online learning.

In this paper, we propose a novel DRL approach for throughput maximization by adapting the relays' mode selection, the beamforming strategy, and time allocation for wireless power transfer. A close inspection into two typical DRL algorithms, e.g.,~deep $Q$-learning (DQN) and deep deterministic policy gradient (DDPG)~\cite{ddpg}, reveals that they both rely on a double $Q$-network structure~\cite{luong18}. The online $Q$-network provides the value estimation for each state-action pair using a set of deep neural network (DNN), while the target $Q$-network generates a target value for the online $Q$-network to follow and adapt. The DNN parameters of the target $Q$-network are regularly copied from the online $Q$-network, which brings strong coupling between two $Q$-networks and results in performance fluctuations. Besides, both $Q$-networks are typically randomly initialized, which requires a long warm-up period to stabilize the learning. These observations motivate us to design a novel hierarchical DDPG (denoted as H-DDPG) algorithm to improve the learning efficiency. One novel design of the H-DDPG is to reduce the action space by proposing a hierarchical structure, i.e., the binary mode selection is optimized by the outer-loop DQN algorithm while the continuous variables are left for inner-loop DDPG algorithm. The second design is to provide the inner-loop DDPG a better-informed target for DNN training, by integrating the model-based optimization into the model-free DDPG algorithm.

A preliminary work on the hierarchical learning framework has been presented in our conference paper in~\cite{hddpg}, where we consider the power-splitting (PS) protocol for the energy harvesting relays. In this work, we focus on a different time-switching (TS) protocol and analyze more detailed designs for the optimization-driven H-DDPG algorithm. We also try to integrate different performance lower bounds into the DDPG framework, and verify its robustness and learning efficiency. To be specific, our main contributions in this paper are summarized as follows:
\begin{enumerate}
  \item A multi-relay-assisted communication model is proposed for a hybrid radio network, where each relay can optimize its radio mode to assist RF communications from the HAP to the receiver. The passive relays enhance the RF channels, while the active relays amplify and forward the information using the energy harvested from the HAP in the TS protocol.
  \item A throughput maximization problem is formulated to optimize the time allocation and beamforming strategies, as well as the relaying strategy. Though it is difficult to solve optimally, we propose a tractable approximation as the lower bound of the original optimization problem, which can be optimally solved by the monotonic optimization algorithm.
  \item A hierarchical learning framework is proposed to solve the throughput maximization problem. We update the relays' mode selection in the outer-loop DQN algorithm, while use the inner-loop DDPG algorithm to adapt the relays' reflection coefficients, the HAP's beamforming and time allocation strategies.
  \item We further design the optimization-driven H-DDPG algorithm to improve the learning efficiency, which employs a model-based optimization module to find a lower bound on the target $Q$-value by solving an approximation of the original problem. Extensive numerical results reveal that the optimization-driven H-DDPG algorithm achieves significantly higher reward and more stable learning performance compared to the model-free DDPG algorithm.
\end{enumerate}

The rest of this paper is organized as follows. Section~\ref{sec_model} presents the multi-relay-assisted communication model. Section~\ref{sec_lbds} proposes a throughput maximization problem and derives its lower bound. In Section~\ref{sec_H-DDPG}, we integrate the lower bound into the DRL framework and present the optimization-driven H-DDPG algorithm. Numerical evaluations and conclusions are presented in Sections~\ref{sec_sim} and~\ref{sec_con}, respectively.

\section{System model}\label{sec_model}
%Considering a dense IoT network,

We allow a group of single-antenna user devices, denoted by the set $\mathcal{N}=\{1,2,\ldots,N\}$, to harvest energy from a multi-antenna HAP in the TS protocol and then assist the information transmission from the HAP to its receiver. The HAP has constant power supply and fixed transmit power $p_t$. Its transmit beamforming vector can be tuned to optimize the wireless power transfer to different relays. The relay-assisted information transmission follows a two-hop half-duplex protocol. Each RF-powered relay has a dual-mode radio structure that can switch between the passive backscatter communications and the active RF communications, e.g.,~\cite{ieeenetwork} and~\cite{iot20}. An example with two relays (one in passive mode and the other in active mode) is shown in Fig.~\ref{fig:hybridmodel}. Let ${\bf f}_0 \in \mathbb{C}^K$ and ${\bf f}_n \in\mathbb{C}^K$ denote the complex channels from the HAP (with $K$ antennas) to the receiver and to the $n$-th relay, respectively. The complex channels from relay-$n$ to relay-$m$ and to the receiver are given by $z_{nm}$ and $g_n$, respectively.

% All channels are assumed to be block fading~\cite{gsmwcnc19}.

\subsection{Time-switching (TS) Protocol for Active Relays}
As shown in Fig.~\ref{fig:hybridmodel}, the TS protocol assigns a dedicated sub-slot with length $w$ for the HAP to beamform RF power to the active relays. The other part $(1-w)$ of a time slot is further divided into two equal sub-slots with length $t=(1-w)/2$ for the active relays to receive and collaboratively forward signals, respectively in two hops. In the first hop, the HAP sets its signal beamformer ${\bf w}_1$ and transmits information to the active relays and the receiver directly. In the second hop, the active relays collaboratively beamform the received signals to the receiver. A strong direct link ${\bf f}_0$ between the HAP and the receiver may exist in both hops and contribute significantly to the overall throughput, e.g.,~\cite{iot20} and~\cite{gsmwcnc19}. Hence, the HAP can transmit the same information $s$ in two hops to enhance the reliability of signal reception by maximal ratio combining (MRC) at the receiver. Let $({\bf w}_1,{\bf w}_2)$ denote the HAP's signal beamforming vectors in two hops. It is clear that ${\bf w}_1$ and ${\bf w}_2$ are not necessarily the same in two hops.

\subsection{Channel Enhancement via Passive Relays}
Let binary variable $b_k\in\{0,1\}$ denote the radio mode of relay-$k$, i.e.,~$b_k=0$ and $b_k=1$ indicate the active and passive relays, respectively. Correspondingly, we can define $\mathcal{N}_a$ and $\mathcal{N}_b$ as the sets of all active and passive relays, respectively. Let $\hat{\bf f}_0$ and $\hat{\bf f}_n$ denote the backscatter-assisted channels from the HAP to the receiver and to the active relay-$n$, respectively. Following a similar model in~\cite{iot20}, the enhanced channels $\hat{\bf f}_0$ and $\hat{\bf f}_n$ can be represented as follows:
\begin{align}
&\hat{\bf f}_0 = {\bf f}_0 + \sum_{k\in\mathcal{N} } b_k g_k \Gamma_k {\bf f}_k ,~\label{equ_channel_direct}\\
&\hat{\bf f}_n = {\bf f}_n + \sum_{k\in\mathcal{N}} b_k z_{kn} \Gamma_k {\bf f}_k, \quad \forall n \in \mathcal{N}.\label{equ_channel_relay}
\end{align}
It is clear that $\hat{\bf f}_0$ and $\hat{\bf f}_n$ depend on not only the relays' mode selection but also the complex reflection coefficient $\Gamma_k$ of each passive relay. Given a fixed set of passive relays and their reflection coefficients, we can evaluate the direct channel $\hat{\bf f}_0$ and all relaying channels $\hat{\bf f}_n$ of the active relays, and then we can focus on the relay optimization with only active relays.

\begin{figure}[t]
\centering
\includegraphics[width=0.95\textwidth]{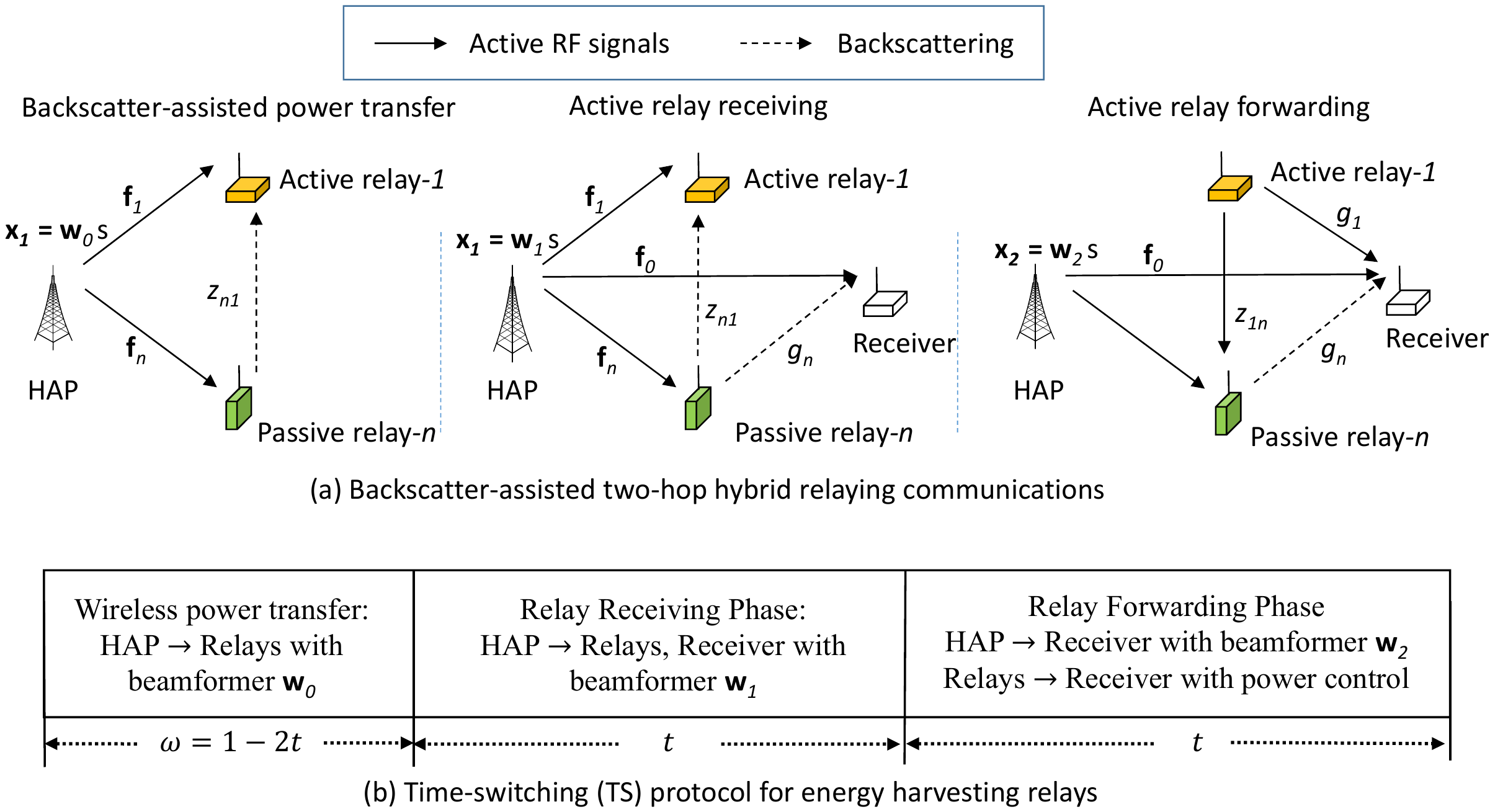}
\caption{Backscatter-assisted two-hop hybrid relaying communications.}\label{fig:hybridmodel}
\end{figure}

%\subsection{Signal Models in Two Hops}\label{sec_direct}

%\subsection{Problem Formulation}

\subsection{Problem Formulation}

In the first hop, the HAP's beamforming information ${\bf x}_1= \sqrt{p_t}{\bf w}_1 s$ can be received by both the active relays and the target receiver directly. The SNR at the receiver can be denoted by $\gamma_1 =  p_t|\hat{\bf f}_0^H {\bf w}_1|^2,$ where $\hat{\bf f}^H_0$ is the Hermitian transpose of the enhanced channel $\hat {\bf f}_0$ in~\eqref{equ_channel_direct} and the noise power is normalized to one. In the second hop, all active relays jointly amplify and forward the information to the receiver. Each active relay-$n$ sets a different power amplifier coefficient $x_n\in(0,1)$. Meanwhile, the HAP beamforms the same information symbol ${\bf x}_2 =\sqrt{p_t}{\bf w}_2 s $ to the receiver. Note that ${\bf w}_2$ can be simply aligned to the channel $\hat{\bf f}_0$. Hence, the received signal $r_d$ at the receiver is given by $r_d  = \sum_{n=1}^{N} \hat g_n  x_n r_n + \sqrt{p_t}||\hat{\bf f}_0|| s + v_d$, where $\hat g_n$ denotes the enhanced channel from the active relay-$n$ to the receiver. Similar to~\cite{iot20}, the SNR in the second hop can be simplified as follows:
\begin{equation}
\gamma_2 = \frac{\left| \sum_{n\in\mathcal{N}} x_n y_n \hat g_n + \sqrt{p_t}\lVert\hat{\bf f}_0^H\rVert \right|^2}{1 + \sum_{n\in\mathcal{N}}|x_n \hat g_n|^2 },
\label{equ:gamma}
\end{equation}
where $y_n \triangleq \sqrt{p_t} \hat{\bf f}_n^H {\bf w}_1$ for notational convenience. The power amplifier coefficient of the active relay-$n$ is given by $x_n = \left(\frac{p_n}{1+|y_n|^2}\right)^{1/2}$, where $p_n$ denotes its transmit power. To maximize the overall throughput, we aim to optimize the HAP's beamforming $({\bf w}_0,{\bf w}_1)$, time allocation $t$, and the relays' operating parameters $(b_n,\theta_n)$ in the following throughput maximization problem:
\begin{subequations}\label{prob_bin}
\begin{align}
\max_{ {\bf w}_0, {\bf w}_1, b_n, t, \theta_n} ~&~ t\log_2{(1+\gamma_1+\gamma_2)} \label{obj_sumsnr}\\
s.t.
~&~ ||{\bf w}_0 || \leq 1 \text{ and } ||{\bf w}_1|| \leq 1,\label{con_bvector}\\
~&~ p_n\leq\eta(1/t-2)p_t|\hat{\bf f}_n^H{\bf w}_0|^2,   \forall \,\, n\in\mathcal{N}_a,\label{con_powerbd}\\
~&~ t \in(0,1/2), \label{con_rho}\\
~&~ b_n \in \{0,1\}, \quad \forall\,\, n\in\mathcal{N},\\
~&~ \theta_n \in [0, 2\pi] ,\quad \forall\,\, n\in\mathcal{N}_b.\label{con_gamma}
\end{align}
\end{subequations}
The constraints in~\eqref{con_bvector} denote the HAP's feasible beamforming vectors in two hops. The constraints in~\eqref{con_powerbd} and~\eqref{con_rho} determine the active relays' transmit power in the second hop, which is upper bounded by the energy harvested from the HAP's beamforming in the first phase. The constant $\eta$ denotes the energy harvesting efficiency. The constraint in~\eqref{con_gamma} ensures that the complex reflection coefficient $\Gamma_n=|\Gamma_n|e^{j\theta_n}$ of each passive relay is fully controllable via load modulation~\cite{ieeenetwork,iot20}. From~\eqref{equ_channel_direct} and~\eqref{equ_channel_relay}, we observe that the phase $\theta_n$ is a critical design variable for channel enhancement while $|\Gamma_n|$ can be simply set to its maximum $\Gamma_{\max}$ to enhance the reflected signal power.

\section{Performance Lower Bound}\label{sec_lbds}
The optimization of $(\mathcal{N}_a, \mathcal{N}_b)$ is combinatorial. Even with a fixed radio mode selection $b_n$, the joint optimization of $({\bf w}_0, {\bf w}_1)$ and the passive relays' operating parameters $\theta_n$ is still challenging due to couplings among different active relays. In the following part, with fixed radio mode, we firstly derive a lower bound on~\eqref{prob_bin} by optimizing the beamforming and time allocation strategies. Then this lower bound will be used to devise the optimization-driven learning framework for solving the original optimization problem~\eqref{prob_bin} with enhance efficiency and reward performance.

\subsection{Lower Bound via Monotonic Optimization}

Given the set $\mathcal{N}_b$ of passive relays and the reflection coefficients $\Gamma_{n}$, the enhanced channels for active RF communications can be updated in~\eqref{equ_channel_direct} and~\eqref{equ_channel_relay}. Then, we can formulate the throughput maximization problem with the active relays only:
\begin{subequations}\label{prob_ts}
\begin{align}
\max_{t,{\bf w}_0,{\bf w}_1}~&~t\log_2{(1+\gamma_1+\gamma_2)} \label{obj_ts}\\
s.t.~&~p_n\leq\eta(1/t-2)p_t|\hat{\bf f}_n^H{\bf w}_0|^2, \quad \forall\,\, n\in\mathcal{N}_a,\label{con_ts_power}\\
~&~ 0<t<1/2, ||{\bf w}_0 || \leq 1 \text{ and } ||{\bf w}_1|| \leq 1.
\end{align}
\end{subequations}
Note that an optimal solution to problem~\eqref{prob_ts} is still unavailable due to the non-convex structure. A lower bound on~\eqref{prob_ts} can be useful to evaluate the transmission performance, which leads to the following optimization problem:
\begin{proposition}\label{pro_ts}
A feasible lower bound on~\eqref{prob_ts} is given by the following problem:
\begin{subequations}\label{prob_trans_ts}
\begin{align}
\max_{t,{\bf w}_0,{\bf w}_1}~&~t\log_2{\left(1+p_t||\hat {\bf f}_0||^2 + p_t\sum_{n\in\mathcal{N}_a\cup\{0\}} s_{n,1}\right)}\label{obj_trans_ts}\\
s.t.~&~ \psi_n(1/t-2) s_{n,0} \geq s_{n,1}(1+p_t s_{n,1}), \forall\,\, n\in\mathcal{N}_a,\label{cons_trans_ts}\\
%~&~ s_{n, i} \leq |\hat{\bf f}_n^H {\bf w}_i|^2, \quad i\in\{0,1\}, n\in\{0\}\cup\mathcal{N}, \label{con_aux}\\
	~&~ 0<t<1/2, ||{\bf w}_0|| \leq 1 \text{ and } ||{\bf w}_1|| \leq 1.\label{con2_trans_ts}
\end{align}
\end{subequations}
where $\psi_n \triangleq \eta p_t  |\hat g_n|^2 ||\hat {\bf f}_0||^2$ and $s_{n,i}\triangleq|\hat{\bf f}_n^H{\bf w}_i|^2$ for $i\in\{0,1\}$ and $n\in{0}\cup \mathcal{N}_a$.
\end{proposition}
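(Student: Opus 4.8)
The plan is to prove the proposition by \emph{exhibiting}, for every point feasible in~\eqref{prob_trans_ts}, a point feasible in~\eqref{prob_ts} with the same objective value; this makes the optimum of~\eqref{prob_trans_ts} a valid lower bound on that of~\eqref{prob_ts}. The first step is bookkeeping on the objective: since $p_t s_{0,1}=p_t|\hat{\bf f}_0^H{\bf w}_1|^2=\gamma_1$ and $p_t s_{n,1}=p_t|\hat{\bf f}_n^H{\bf w}_1|^2=|y_n|^2$ for $n\in\mathcal{N}_a$, the argument of the logarithm in~\eqref{obj_trans_ts} equals $1+\gamma_1+p_t\lVert\hat{\bf f}_0\rVert^2+\sum_{n\in\mathcal{N}_a}|y_n|^2$, whereas the argument in~\eqref{obj_ts} is $1+\gamma_1+\gamma_2$. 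Since the box constraints of the two problems coincide and~\eqref{con_ts_power} is trivially satisfiable (it only bounds the relay powers $p_n$ from above), it therefore suffices to show that at any $(t,{\bf w}_0,{\bf w}_1)$ satisfying~\eqref{cons_trans_ts} and the box constraints the active relays can choose their forwarding parameters so that $\gamma_2=p_t\lVert\hat{\bf f}_0\rVert^2+\sum_{n\in\mathcal{N}_a}|y_n|^2$ without violating the harvested-energy budget~\eqref{con_ts_power}. I would also note at the outset that the feasible set of~\eqref{prob_trans_ts} is nonempty (e.g.\ any ${\bf w}_1$ orthogonal to all $\hat{\bf f}_n$, $n\in\mathcal{N}_a$, annihilates the left side of~\eqref{cons_trans_ts}), so the bound is ``feasible'' as claimed.

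The second step handles $\gamma_2$. Because ${\bf w}_2$ is aligned to $\hat{\bf f}_0$, the direct term $\sqrt{p_t}\lVert\hat{\bf f}_0\rVert$ in~\eqref{equ:gamma} is real and positive, so I would first let each relay rotate its forwarding gain by $-\arg(\hat g_n y_n)$, which makes every product $x_n y_n\hat g_n$ real and positive without changing $|x_n\hat g_n|$. Abbreviating $a_n\triangleq|x_n\hat g_n|$, $u_n\triangleq|y_n|$ and $v\triangleq\sqrt{p_t}\lVert\hat{\bf f}_0\rVert$, this turns $\gamma_2$ into $\big(\sum_{n\in\mathcal{N}_a}a_n u_n+v\big)^2\big/\big(1+\sum_{n\in\mathcal{N}_a}a_n^2\big)$. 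Applying the Cauchy--Schwarz inequality to the vectors $(a_1,\dots,a_{|\mathcal{N}_a|},1)$ and $(u_1,\dots,u_{|\mathcal{N}_a|},v)$ gives $\gamma_2\le v^2+\sum_{n\in\mathcal{N}_a}u_n^2$, with equality iff $a_n=u_n/v$ for all $n$; a one-line substitution then confirms that the choice $a_n=u_n/v$ indeed attains $\gamma_2=v^2+\sum_{n\in\mathcal{N}_a}u_n^2=p_t\lVert\hat{\bf f}_0\rVert^2+\sum_{n\in\mathcal{N}_a}|y_n|^2$, which is exactly the term needed to match~\eqref{obj_trans_ts}.

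The last step, which I expect to be the crux, is verifying that this SNR-optimal gain is energy-feasible, and it is precisely here that~\eqref{cons_trans_ts} is used. Since $|x_n|^2=p_n/(1+|y_n|^2)$, demanding $a_n=|x_n||\hat g_n|=u_n/v$ pins down the required relay power $p_n=|y_n|^2(1+|y_n|^2)\big/\big(p_t\lVert\hat{\bf f}_0\rVert^2|\hat g_n|^2\big)$; substituting $|y_n|^2=p_t s_{n,1}$ and $\psi_n=\eta p_t|\hat g_n|^2\lVert\hat{\bf f}_0\rVert^2$ and rearranging, the budget inequality $p_n\le\eta(1/t-2)p_t|\hat{\bf f}_n^H{\bf w}_0|^2$ collapses to exactly $s_{n,1}(1+p_t s_{n,1})\le\psi_n(1/t-2)s_{n,0}$, i.e.~\eqref{cons_trans_ts}. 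Hence, whenever~\eqref{cons_trans_ts} holds, the constructed relay configuration is feasible for~\eqref{prob_ts} and achieves the same objective, completing the argument. The two points I would be careful to state explicitly are that the forwarding gains must be allowed to carry a tunable phase (so the terms in the numerator of~\eqref{equ:gamma} can be coherently aligned) and that a relay may forward with \emph{less} than its full harvested energy, since in general the SNR-optimal gain $a_n=u_n/v$ corresponds to $p_n$ strictly below the budget; without these two degrees of freedom the matching feasible policy for~\eqref{prob_ts} could not be built.
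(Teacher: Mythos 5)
Your proof is correct and follows essentially the same route as the paper's: the paper bounds $\gamma_2$ by the Rayleigh quotient of the rank-one matrix ${\bf y}{\bf y}^H$ and imposes the equality condition ${\bf z}=c{\bf y}$, which is exactly your Cauchy--Schwarz step with equality at $a_n=u_n/v$, and in both arguments the relay power required to attain that equality turns the harvested-energy budget into constraint~\eqref{cons_trans_ts}. Your write-up is, if anything, slightly more explicit than the paper's about the achievability direction (constructing a feasible point of~\eqref{prob_ts} from one of~\eqref{prob_trans_ts}) and about the phase alignment and the freedom to under-spend harvested energy, both of which the paper leaves implicit.
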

\begin{proof}
The proof follows a similar idea as that in~\cite{gsmwcnc19} by reformulating $\gamma_2$ as a Rayleigh quotient and then we can find an achievable lower bound on the received SNR. Specifically, we can view the HAP as a virtual relay in the second hop and define $y_0 = \sqrt{p_t}{\hat {\bf f}}_0^H {\bf w}_2$ to account for the direct link. Then, we can rewrite $\gamma_2$ in~\eqref{equ:gamma} as $\gamma_{2} = \frac{({\bf x \circ g})^H ({\bf y} {\bf y}^H) ({\bf x \circ g})}{ ({\bf x \circ g})^H ({\bf x \circ g}) }$, where ${\bf x}$ and ${\bf g}$ are $(N+1)\times 1$ vectors and we require $x_0 g_0 = 1$. The symbol $\circ$ denotes the Hadamard product. Let ${\bf z} = {\bf x \circ g}$ and then we have $\gamma_2 = {\bf z}^H ({\bf y} {\bf y}^H){\bf z}/||{\bf z}||^2$, which implies the following inequality
\begin{equation}\label{equ-ub}
\gamma_1 +\gamma_2 \leq \bar\gamma({\bf w}_1) \triangleq p_t||\hat {\bf f}_0||^2 + p_t\sum_{n\in\mathcal{N}_a\cup\{0\}} |\hat {\bf f}_n^H{\bf w}_1|^2.
\end{equation}
The first inequality holds due to the property of the Rayleigh quotient and it holds with equality when ${\bf z} = c {\bf y}$ for some scalar $c$, which implies the following equality constraints:
\begin{subequations}\label{equ_cond_x0g0}
\begin{align}
{p_t c^2| \hat{\bf f}_0^H {\bf w}_2|^2} &= 1, \text{ and }\label{equ_cond1 x0g0}\\
{p_t c^2 }|\hat{\bf f}_n^H {\bf w}_1|^2 &= {\frac{p_n|\hat g_n|^2 }{1 +  p_t|\hat {\bf f}_n^H {\bf w}_1|^2}} , \quad \forall\, n \in \mathcal{N}.\label{equ_cond2 x0g0}
\end{align}
\end{subequations}
Hence, we can find the lower bound on~\eqref{prob_ts} by maximizing~$t\log_2(1+\bar\gamma({\bf w}_1))$ subject to the above two constraints as well as the power budget constraint in~\eqref{con_ts_power}. Note that ${\bf w}_2$ can be aligned with the direct channel $\hat{\bf f}_0$. Thus, we can rewrite~\eqref{equ_cond1 x0g0} as $p_t c^2 = ||\hat{\bf f}_0||^{-1} $ and then reformulate~\eqref{equ_cond2 x0g0} as:
\begin{equation}\label{equ_cond}
|\hat{\bf f}_n^H {\bf w}_1|^2 =\frac{p_n g_n^2||\hat{\bf f}_0||^2}{1 + p_t |\hat{\bf f}_n^H {\bf w}_1|^2} ,  \quad \forall\, n\in\mathcal{N},
\end{equation}
which can be further substituted into~\eqref{con_ts_power}, resulting in the lower bound in~\eqref{prob_trans_ts}.
\end{proof}
%The detailed proof of Proposition~\ref{pro_ts} follows a similar approach in~\cite{gsmwcnc19} and thus omitted here.
Problem~\eqref{prob_trans_ts} is still non-convex due to the coupling between $t$ and ${\bf w}_1$. Though convexity is not assured, monotonicity is also an appealing structural property that can be exploited for efficient algorithm design~\cite{monot13}. Specifically, by a change of variable, the SNR $\bar \gamma$ can be viewed as the decision variable. Then we can rewrite~\eqref{obj_trans_ts} in a simpler form as $r(t,\bar\gamma) = t \log_2(1 + \bar\gamma)$, subject to the feasible set as follows:
\begin{equation}\label{equ-normal}
\Omega \triangleq \left\{
(t,\bar\gamma) \left| \begin{array}{l}\bar\gamma \leq  p_t||\hat{\bf f}_0||^2 +  p_t \sum_{n=0}^N s_{n,1}, \eqref{cons_trans_ts}-\eqref{con2_trans_ts},\\ s_{n,i}\leq |\hat{\bf f}_n^H{\bf w}_i|^2, \text{for }i\in\{0,1\}, n\in{0}\cup \mathcal{N}_a.\end{array}\right.\right\}
\end{equation}
It is clear that the new objective $r(t,\bar\gamma)$ is monotonically increasing in both $t$ and $\bar\gamma$. This implies that its optimum will be achieved on the boundary of the feasible set $\Omega$.

\subsection{Structural Property of the Feasible Region}
The monotonic optimization algorithm successively approximates the feasible set $\Omega$ by regularly-shaped polyblocks~\cite{monot13}, which is a union of finite box sets. A box set is in the form of $[{\bf 0}, {\bf v}]$, where ${\bf v}$ is the end point of the box, and also called a vertex of the polyblock. Let polyblock $P$ be an approximation of the feasible set $\Omega$ and we have $P\supset\Omega$. As $r(t,\bar\gamma)$ is monotonically increasing, its upper bound can be obtained on one of the vertex points of the polyblock $P$, i.e.,
\begin{equation}\label{equ-approx}
\max_{(t,\bar\gamma) \in {  \Omega} } r(t,\bar\gamma) \leq \max_{(t,\bar\gamma) \in P} r(t,\bar\gamma) = \max_{(t,\bar\gamma) \in V} r(t,\bar\gamma),
\end{equation}
where $V$ denotes the set of vertices of the polyblock $P$. The first inequality holds due to the fact that ${\Omega} \subset P $ while the equality holds due to the monotonicity of the objective function $r(t,\bar\gamma)$. Though the maximization of $r(t,\bar\gamma)$ over a non-convex set $\Omega$ is difficult, its upper bound can be easily obtained by evaluating the objective function on each of the vertex points in set $V$.

To approximate the optimum $\max_{(t,\bar\gamma) \in { \Omega} } r(t,\bar\gamma)$, the above inequality in~\eqref{equ-approx} implies that we need to generate smaller polyblocks successively to approximate the feasible region ${ \Omega}$, which depends on the structural property of $\Omega$. To proceed, we can verify that the feasible set $\Omega$ defined in~\eqref{equ-normal} represents a normal set, which bears the following structural property:
\begin{proposition}\label{pro_normal_ts}
Given $(t, \bar\gamma) \in \Omega$, we always have $(t', \bar\gamma')\in\Omega$ for any $(t', \bar\gamma') \preceq (t, \bar\gamma)$.
\end{proposition}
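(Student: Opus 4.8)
The plan is to read $\Omega$ in~\eqref{equ-normal} not as a subset of $\mathbb{R}^2$ cut out by explicit inequalities, but as the projection onto the $(t,\bar\gamma)$-plane of the full feasible region of~\eqref{prob_trans_ts}: a pair $(t,\bar\gamma)$ lies in $\Omega$ exactly when there exist auxiliary quantities $\{s_{n,i}^\star\}$ (which may be taken nonnegative, as they stand for $|\hat{\bf f}_n^H{\bf w}_i|^2$) and beamformers $({\bf w}_0^\star,{\bf w}_1^\star)$ satisfying every listed condition. To prove the normal-set property I would take any such witness for $(t,\bar\gamma)$ and show that the \emph{same} witness also certifies $(t',\bar\gamma')\in\Omega$ whenever $(t',\bar\gamma')\preceq(t,\bar\gamma)$ with $t'>0$; that is, only the two projected coordinates are perturbed while $\{s_{n,i}^\star\}$ and $({\bf w}_0^\star,{\bf w}_1^\star)$ are frozen, so the proof reduces to re-checking the constraints.

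Then it is a matter of going through the constraints one by one. The coordinate $\bar\gamma$ occurs only on the left-hand side of the single upper bound $\bar\gamma\le p_t\|\hat{\bf f}_0\|^2+p_t\sum_{n=0}^N s_{n,1}^\star$, whose right-hand side is unchanged, so $\bar\gamma'\le\bar\gamma$ preserves it, and no other constraint mentions $\bar\gamma$. The coordinate $t$ enters only through the factor $(1/t-2)$ in~\eqref{cons_trans_ts} and through the box $0<t<1/2$. On $(0,1/2)$ the map $t\mapsto 1/t-2$ is strictly positive and strictly decreasing, hence $1/t'-2\ge 1/t-2>0$; multiplying by the nonnegative quantity $\psi_n s_{n,0}^\star$ (recall $\psi_n=\eta p_t|\hat g_n|^2\|\hat{\bf f}_0\|^2>0$ and $s_{n,0}^\star\ge0$) yields $\psi_n(1/t'-2)s_{n,0}^\star\ge\psi_n(1/t-2)s_{n,0}^\star\ge s_{n,1}^\star(1+p_t s_{n,1}^\star)$, so~\eqref{cons_trans_ts} still holds, while $0<t'\le t<1/2$ gives the box. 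The remaining conditions $\|{\bf w}_i^\star\|\le1$ and $s_{n,i}^\star\le|\hat{\bf f}_n^H{\bf w}_i^\star|^2$ do not involve $t$ or $\bar\gamma$ at all. Hence $(t',\bar\gamma')\in\Omega$.

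I do not expect a genuine obstacle. The two things deserving care are (i) making explicit that $\Omega$ is a projection, so that the argument is "reuse a feasibility certificate" rather than "build a new one," and (ii) the sign and monotonicity observation that shrinking $t$ actually \emph{relaxes}~\eqref{cons_trans_ts}, which rests on $(1/t-2)$ being positive and decreasing on the admissible interval and on its multiplying the nonnegative term $\psi_n s_{n,0}^\star$ on the already-larger side of the inequality. One minor boundary caveat is that the open constraint $t>0$ forces the statement to be read with $t'>0$ (equivalently, one works with the intersection of the normal hull of $\Omega$ with $\{t>0\}$); since $t=0$ gives zero throughput this restriction is immaterial for the monotonic-optimization procedure built on this proposition.
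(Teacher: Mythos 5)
Your proof is correct, and it takes a genuinely different (and arguably cleaner) route than the paper's. The paper constructs a \emph{new} feasibility witness: it sets $\alpha=\bar\gamma'/\bar\gamma$ and rescales the beamformers to ${\bf w}_1'=\alpha{\bf w}_1$, ${\bf w}_0'=\alpha{\bf w}_0$, then re-verifies~\eqref{cons_trans_ts}--\eqref{con2_trans_ts} through a chain of inequalities driven by $\alpha\le 1$ and $t'\le t$. You instead observe that, because $\Omega$ in~\eqref{equ-normal} is defined by the \emph{inequalities} $\bar\gamma\le p_t\|\hat{\bf f}_0\|^2+p_t\sum_n s_{n,1}$ and $s_{n,i}\le|\hat{\bf f}_n^H{\bf w}_i|^2$ (rather than equalities), the original witness $({\bf w}_0,{\bf w}_1,\{s_{n,i}\})$ can simply be reused unchanged: decreasing $\bar\gamma$ only loosens its single upper bound, and decreasing $t$ only loosens~\eqref{cons_trans_ts} since $1/t-2$ is positive and decreasing on $(0,1/2)$ and multiplies the nonnegative quantity $\psi_n s_{n,0}$. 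What your approach buys is the elimination of the rescaling step entirely, and with it the delicate bookkeeping the paper's argument requires (e.g., under ${\bf w}_1'=\alpha{\bf w}_1$ one actually gets $s_{n,1}'=\alpha^2 s_{n,1}$, so the paper's displayed chain needs care to justify); what the paper's construction buys is that it would survive even a reading of $\Omega$ in which $\bar\gamma$ is tied to the SNR by equality, since it genuinely manufactures a solution attaining the smaller $\bar\gamma'$. Your boundary caveat about $t'>0$ is apt and applies equally to the paper's statement.
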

\begin{proof}
Suppose that $(t, \bar\gamma) \in \Omega$ and the feasible solution is given by $(t,{\bf w}_0,{\bf w}_1)$. To show $(t', \bar\gamma')\in\Omega$, we need to construct a new solution $(t',{\bf w}_0',{\bf w}_1')$ such that~\eqref{cons_trans_ts}-\eqref{con2_trans_ts} hold with $(t', \bar\gamma')$. For this purpose, we can set $\alpha =  \bar\gamma'/\bar\gamma$ and then construct the new solution as ${\bf w}_1' = \alpha {\bf w}_1$ and ${\bf w}_0' = \alpha {\bf w}_0$. This implies that
\begin{align*}
s_{n,1}'( 1 + p_t s_{n,1}') ~& \leq~ \alpha s_{n,1}(1 + p_t s_{n,1} ) \leq \alpha \psi_n  ( 1/t -2  ) s_{n,0}\leq \psi_n ( 1/t' -2  )s_{n,0}'.
\end{align*}
The first inequality is due to $\alpha \leq 1$ and the second inequality holds as $(t, \bar\gamma) \in \Omega$ with the solution $(t,{\bf w}_0,{\bf w}_1)$. The last inequality easily follows as we have $t' \leq t$. Hence, the new solution $(t',{\bf w}_0',{\bf w}_1')$ is feasible to~\eqref{cons_trans_ts}-\eqref{con2_trans_ts}. Besides, we have $\bar \gamma' = \alpha \bar \gamma \leq  p_t||\hat{\bf f}_0||^2 +    p_t \sum_{n=0}^N s'_{n,1}$. This implies that $(t', \bar\gamma') \in \Omega$.
\end{proof}

%Given a polyblock approximation $P\supset\Omega$,

The structural property in Proposition~\ref{pro_normal_ts} will help find a smaller polyblock $P'$ that gives a closer approximation to $\Omega$. The generation of new polyblocks follows a similar procedure as that in~\cite{yztcom}. Specifically, if we have some ${\bf z}\in P$ and ${\bf z}\notin \Omega$, it follows that ${\bf z}'\notin\Omega$ for any ${\bf z}'\succeq {\bf z}$ by Proposition~\ref{pro_normal_ts}. This implies that we can construct a better approximation $P'$ by cutting off a subset from $P$, i.e.,
\begin{equation}\label{equ_newblock}
P' = P \setminus \{{\bf z}'\in P: {\bf z}'\succeq {\bf z}\}.
\end{equation}
This ensures that $P\supset P' \supset \Omega$. The initial polyblock $P_0$ can be simply set as the box set that covers the whole feasible set $\Omega$. Hence, the vertex of the polyblock $P_0$ is given by $(1/2, \bar\gamma_{\max})$, where $\bar\gamma_{\max}$ denotes a upper bound on $\bar\gamma$. From~\eqref{equ-normal}, we can simply set $\bar\gamma_{\max}$ as follows:
\[
\bar\gamma_{\max} =  2 p_t||\hat {\bf f}_0||^2 + p_t\sum_{n\in\mathcal{N}_a} ||\hat {\bf f}_n||^2.
\]

In the $k$-th iteration of the polyblock approximation, we first determine an upper bound $r_k^U$ on one vertex of the polyblock $P_k$, i.e., $r_{k}^{U}=\max _{\mathbf{z} \in V_{k}} r(\mathbf{z})$. Then we have the optimal vertex point $\mathbf{z}_{k}=\arg \max _{\mathbf{v} \in V_{k}} r(\mathbf{v})$ and the corresponding upper bound $r_{k}^{U}=r\left(\mathbf{z}_{k}\right)$. A lower bound $r_k^L$ can be determined by projection, detailed in the next part. During successive polyblock approximation, the update to $r_k^L$ and $r_k^U$ will become closer to each other in each iteration. The iteration terminates when the gap between $r_k^L$ and $r_k^U$ is below a pre-defined error bound~\cite{yztcom}.

\subsection{Projection via Solving SDPs}

The main challenge of the monotonic optimization algorithm lies in the search for the lower bound $r_k^L$ in each iteration. Fortunately, we show that the lower bound $r_k^L$ can be achieved optimally by solving a set of semidefinite programs (SDPs). Specifically, if the optimal vertex point $\mathbf{z}_{k}$ is feasible to $\Omega$, we can conclude that $r_k^L = r_k^U$ and then terminate the algorithm. However, if $\mathbf{z}_{k}\notin\Omega$, we can multiply $\mathbf{z}_{k}$ by a scaling factor $\lambda\in(0,1)$ to project the infeasible $\mathbf{z}_{k}$ onto the boundary of $\Omega$. The projection aims at finding the maximum factor $\lambda_{k}$ such that $\lambda_k \mathbf{z}_k \in \Omega$, i.e., $\lambda_{k} = \arg\max\{\lambda: (\lambda t_{k}, \lambda \bar{\gamma}_{k}) \in\Omega \}$. It is easy to see that $\lambda \mathbf{z}_k \in \Omega$ for $\lambda \in(0, \lambda_{k} ]$ and $\lambda \mathbf{z}_k \notin\Omega$ for $\lambda \in(\lambda_{k} , 1]$, by the structural property in Proposition~\ref{pro_normal_ts}. This implies a bisection method to find the maximum $\lambda_{k}$ and the projection point $\mathbf{o}_k \triangleq \lambda_k \mathbf{z}_k$. Given a fixed $\lambda\in(0,1)$, we can increase it in next iteration if $\lambda \mathbf{z}_k \in \Omega$, otherwise decrease it by the bisection method~\cite{yztcom}.

By the definition of $\Omega$ in~\eqref{equ-normal}, the feasibility check $(\lambda t_{k}, \lambda \bar{\gamma}_{k}) \in \Omega$ with a fixed $\lambda$ is equivalent to solve the following problem:
\begin{subequations}\label{prob_feas}
\begin{align}
\max_{t,{\bf w}_0,{\bf w}_1}~&~ p_t||\hat {\bf f}_0||^2 + p_t\sum_{n\in\mathcal{N}_a\cup\{0\}} s_{n,1} \label{obj_feas}\\
s.t.~&~ \left[\begin{matrix} \psi_n \Big(\frac{1}{\lambda t_k} - 2 \Big) s_{n,0} - s_{n, 1}, & \sqrt{p_t} s_{n,1} \\
\sqrt{p_t} s_{n,1}, & 1\end{matrix}\right] \succeq 0 , \label{con_feas_matrix}\\
~&~ s_{n,i} \leq |\hat{\bf f}_n^H{\bf w}_i|^2 \text{ and } ||{\bf w}_i|| \leq 1, \label{con_feas_w} \\
~&~ i\in\{0,1\} \text{ and } n\in{0}\cup \mathcal{N}_a . \label{con_feas_n}
\end{align}
\end{subequations}
The constraint in~\eqref{con_feas_matrix} is an equivalence of~\eqref{cons_trans_ts} and it becomes a linear matrix inequality. Considering semidefinite relaxation (SDR), we can transform~\eqref{prob_feas} into a convex form and solve it efficiently by semidefinite programming. Let $m_k$ denote the optimum to~\eqref{prob_feas} and then we can conclude that $(\lambda t_{k}, \lambda \bar{\gamma}_{k}) \in \Omega$ if $m_k\geq \lambda \bar \gamma_k$. Given the projection point $\mathbf{o}_k=\lambda_k \mathbf{z}_k$, the construction of new polyblock $P_{k+1}$ follows a similar procedure as in~\eqref{equ_newblock}. In particular, we can construct a separating cone $P_{k}^{c} \triangleq\left\{\mathbf{z} | \mathbf{z} \succeq\mathbf{o}_k\right\}$ such that $P_{k}^{c} \cap \Omega=\emptyset$. Then a new polyblock $P_{k+1}$ can be generated by cutting off $\Delta_{k} \triangleq P_{k}^{c} \cap P_{k}=\left\{\mathbf{z} \in P_{k} | \mathbf{z} \succeq\mathbf{o}_k\right\}$ from the polyblock $P_{k}$. The detailed procedures can be referred to~\cite{yztcom}.

\section{Optimization-driven Hierarchical DRL Framework}\label{sec_H-DDPG}

%In previous part, we present a few approximate optimization methods to solve the network performance maximization problem.

The optimization methods are typically based on a simplified system model, e.g., with perfect channel information or ideal system implementation. Though lots of efforts are devoted to designing exact solutions, the optimization based on a simplified system model only provides lower bounds or approximations to the original problems. In this following, we propose a novel optimization-driven DRL framework for throughput maximization in a hybrid relay network, by exploiting the efficiency of optimization methods and the robustness of DRL approaches. In particular, we use DRL to build the learning framework that is robust to complex problem structure and inexact modeling, while integrate optimization methods in the inner loop to reduce the search space and improve the learning efficiency.

\subsection{MDP Reformulation of Problem~\eqref{prob_bin}}

DRL is a combination of DNNs and RL, aiming at solving MDP problems with large action and state spaces. The most straightforward DRL solution to problem~\eqref{prob_bin} is to reformulate it into an MDP and design a single agent at the HAP, which jointly decides the HAP's beamforming and the relaying strategies simultaneously based on the observed state ${\bf s}_t\in\mathcal{S}$ and the knowledge learnt from past {experiences}. The system state ${\bf s}_t = ({\bf c}_t, {\bf e}_t)$ at the $t$-th decision epoch is a combination of the channel conditions ${\bf c}_t$ and the relays' energy status ${\bf e}_t = [e_{1,t},e_{2,t},\ldots, e_{N,t}]^T$. The channel conditions include the direct channel $\hat{\bf f}_0$ from the HAP to the receiver and each relay's channel $(\hat{\bf f}_n, \hat{g}_n)$, for $n\in\mathcal{N}$. Hence, we denote ${\bf c}_t = \{\hat{\bf f}_0, (\hat{\bf f}_n, \hat{g}_n)_{n\in\mathcal{N}}\}$. As observed from~\eqref{equ_channel_direct}-\eqref{equ_channel_relay}, the channel conditions depend on the relays' mode selection $b_n$ and operating parameters $\Gamma_k$. The available energy $e_{n,t}$ at relay-$n$ includes the initial residual energy $e_{n,t-1}$ and the harvested energy $h_n\triangleq \eta(1-2t)p_t|\hat{\bf f}_n^H{\bf w}_0|^2$ from RF signals, as shown in~\eqref{con_powerbd}. The dynamics of each relay's energy status also depend on the power consumption $p_n$ in relay communications. The power consumption $p_n$ of active relays depends on data rate in signal forwarding, while it is a small constant for passive relays. We assume  that both channel conditions ${\bf c}_t$ and the energy status ${\bf e}_t$ can be measured at the beginning of each decision epoch. Given the current state ${\bf s}_t$, the action ${\bf a}_t = (t, {\bf w}_0, {\bf w}_1, {\bf o}_t)$ of the DRL agent includes the HAP's the time allocation $t$ and beamforming strategies $({\bf w}_0, {\bf w}_1)$ for energy and information transfer. ${\bf o}_t = \{ b_n, \theta_n\}_{n\in\mathcal{N}}$ denotes each relay's binary mode selection $b_n\in\{0, 1\}$ and the operating parameter $\theta_n$. Except the binary mode selection $b_n$, all other decision variables are continuous variables. We aim to maximize the throughput in problem~\eqref{prob_bin}, subject to the relays' power budget constraints. Hence, the immediate reward can be simply set as the objective in problem~\eqref{prob_bin}, i.e.,~$v_t({\bf s}_t, {\bf a}_t) = t\log_2{(1+\gamma_1+\gamma_2)}$.

\subsection{Model-free DQN and DDPG Algorithms}

%\subsection{Principles of Deep Reinforcement Learning}
%RL provides a solution to find the optimal policy that determines a sequence of actions taken in each decision epoch as the system state evolves.

By Bellman equation, we can simplify the policy optimization as the optimization of action ${\bf a}_t$ in an iterative equation~\cite{sutton}: $V^{\pi^*}({\bf s})
=\max_{{\bf a}\in\mathcal{A}} v_t({\bf s}_t, {\bf a}) + \gamma \mathbb{E}\left[ V({\bf s}_{t+1})\right]$, where $\gamma\in(0,1)$ denotes a discount factor. The expectation is taken over all possible state transitions from the current state ${\bf s}_t$ to the next state ${\bf s}_{t+1}$ when taking action ${\bf a}$ in state ${\bf s}_t$. With small and finite state and action spaces, the optimal policy $\pi^*$ can be obtained by the $Q$-learning algorithm. Specifically, the optimal action at each state is to maximize the $Q$-value, i.e.,~${\bf a}_t^* = \arg\max_{{\bf a}\in\mathcal{A}} Q({\bf s}_t, {\bf a})$, where the $Q$-value is defined as $Q({\bf s}_t, {\bf a}_t)= v_t({\bf s}_t, {\bf a}) + \gamma \mathbb{E}[ V({\bf s}_{t+1})]$. For algorithmic implementation, we can randomly initialize the $Q$-value and then update it by the difference between the current $Q$-value and its target $y_t$, i.e.,
\[
Q_{t+1}({\bf s}_t,{\bf a}_t) = Q_t({\bf s}_t,{\bf a}_t) + \tau_t \Big[ y_t - Q_t({\bf s}_t, {\bf a}_t)\Big],
\]
where $\tau_t$ is a step-size and $y_t$ is evaluated as follows:
\begin{equation}\label{equ_target}
y_t =  r_t({\bf s}_t, {\bf a}_t) + \gamma \max_{{\bf a}_{t+1}} Q_t({\bf s}_{t+1}, {\bf a}_{t+1}).
\end{equation}
The difference $ y_t - Q_t({\bf s}_t, {\bf a}_t)$ is called the temporal-difference (TD) error. For small-size state and action spaces, the $Q$-value can be stored in a table and updated in each decision epoch.

For very large state and action spaces, the DQN algorithm uses DNNs with the weight parameters $\boldsymbol{\omega}_t$ to approximate the $Q$-value function. The input to the DNNs is the current state ${\bf s}_t$ and output is the expected action ${\bf a}_t$. The training of DNNs aims at minimizing a loss function:
\begin{equation}\label{equ_loss}
\ell(\boldsymbol{\omega}_t)=\mathbb{E}\left[(y_{i} -Q_{t}({\bf s}_i, {\bf a}_i| \boldsymbol{\omega}_t))^{2}\right].
\end{equation}
Comparing to the conventional $Q$-learning algorithm, DQN improves the learning efficiency by using a set of historical transition samples $({\bf s}_i, {\bf a}_i, r_i, {\bf s}_{i+1})\in\mathcal{M}_t$, namely, a mini-batch, to train the DNN parameters ${\bm \omega}_t$ at each decision epoch $t$. The target value $y_i$ in the loss function~\eqref{equ_loss} is evaluated by~\eqref{equ_target} for each sample and the expectation is taken over all samples in the mini-batch $\mathcal{M}_t$. Besides, DQN algorithm ensures more stable learning by using a double $Q$-network structure. The online $Q$-network with DNN parameters ${\bm \omega}_t$ generates the value estimation $Q_{t}({\bf s}_i, {\bf a}_i| \boldsymbol{\omega}_t)$ given the state-action pair $({\bf s}_i, {\bf a}_i)$, while the target value $y_i$ is generated by the target $Q$-network with a different set of DNN parameters ${\bm \omega}_t'$, which are delayed copies from the online $Q$-network, i.e.,~${\bm \omega}_t' = {\bm \omega}_{t-t_d}$, where $t_d$ denotes the time delay to the current decision epoch.

A similar double $Q$-network structure also appears in the DDPG algorithm~\cite{ddpg}, which extends the DQN algorithm to solve optimization problems with continuous action space. Besides the DNN approximation for the $Q$-value, DDPG also approximates the policy $\pi_{\bm \mu}$ by using DNNs. The DNN training aims at updating the DNN parameters ${\bm \mu}$ in a gradient direction to improve the value function, which can be rewritten as follows
\[
J({\bm \mu}) = \sum_{{\bf s} \in \mathcal{S}} p({\bf s}) \sum_{{\bf a} \in \mathcal{A}} \pi_{\bm \mu}({\bf a} \vert {\bf s}) Q({\bf s},{\bf a}|\boldsymbol{\omega}),
\]
where $p({\bf s})$ denotes the stationary state distribution corresponding to the policy $\pi_{\bm \mu}$ and $Q({\bf s},{\bf a}|\boldsymbol{\omega})$ is the parameterized $Q$-value. Deterministic policy gradient theorem in~\cite{ddpg} simplifies the gradient evaluation as
\[
\nabla_{\bm \mu} J({\bm \mu}) = \mathbb{E}_{{\bf s} \sim p({\bf s})} [ \nabla_{\bf a} Q({\bf s},{\bf a}|\boldsymbol{\omega}) \nabla_{\bm \mu} \pi_{\bm \mu}({\bf s}) \rvert_{{\bf a}=\pi_{\bm \mu}({\bf s})}],
\]
which can be performed efficiently by sampling the historical trajectories. The policy gradient $\nabla_{\bm \mu} J({\bm \mu})$ motivates the actor-critic framework in~\cite{ddpg}, which updates the DNN parameters $({\bm \mu},{\bm \omega})$ separately.
The actor network updates the policy parameters ${\bm \mu}$ in gradient direction as follows:
\[
{\bm \mu}_{t+1} = {\bm \mu}_t + \alpha_{\mu} {\nabla_{\bf a} Q({\bf s}_t, {\bf a}_t|{\bm \omega}_t) \nabla_{\bm \mu} \pi_{\bm \mu}({\bf s}) \rvert_{{\bf a}_t=\pi_{\bm \mu}({\bf s})}}.
\]
Similar, the critic network updates the $Q$-network as follows:
\[
{\bm \omega}_{t+1} = {\bm \omega}_t + \alpha_{\omega} \delta_t \nabla_{{\bm \omega}} Q({\bf s}_t, {\bf a}_t|{\bm \omega}_t),
\]
where $\delta_t  = y_t - Q({\bf s}_t, {\bf a}_t|{\bm\omega}_t)$ denotes the TD error between $Q({\bf s}_t, {\bf a}_t|{\bm\omega}_t)$ and its target value $y_t$. Two constants $\alpha_{\mu}$ and $\alpha_{\omega}$ are viewed as step-sizes. Similar to DQN algorithm, the training of critic network is performed by sampling a mini-batch of samples from the experience replay memory, aiming to minimize the loss function in~\eqref{equ_loss}, where the target value $y_i$ is given by
\[
y_i = v_i({\bf s}_i, {\bf a}_i) + \gamma Q({\bf s}_{i+1},\pi({\bf s}_{i+1}|{{\bm \mu}'_t})|{\bm \omega}'_t).
\]
The DNN parameters $({{\bm \mu}'_t}, {\bm \omega}'_t)$ of the target networks are regular copies from their online networks $({{\bm \mu}_t}, {\bm \omega}_t)$, respectively.

\subsection{Optimization-driven Learning Strategy}

From the above inspection, we observe that either the DQN or the DDPG algorithm relies on periodical parameters copying from the online $Q$-network to the target $Q$-network. This implies strong coupling between the online and target $Q$-networks and may lead to slow learning rate and unstable issues. As both the online and target $Q$-networks are randomly initialized in the DQN or DDPG algorithm, the evaluation of the immediate reward $v_t({\bf s}_t, {\bf a}_t)$ can be far from its real value in the early stage of learning, which probably misleads the learning process. Therefore, we require a long warm-up period to train both $Q$-networks of the DQN or DDPG algorithm. Besides, the parameters copying from the online $Q$-network to the target $Q$-network is critical to the learning performance. Frequent parameters copying implies unstable learning and even divergence, while less frequent copying slows down the convergence rate. Thus, the optimal setting for parameters copying becomes problematic for practical implementation.

Considering the above difficulties, in the sequel, we aim to stabilize and improve the learning efficiency by proposing two special designs detailed as follows.

\subsubsection{Hierarchical Integration of DQN and DDPG}

Note that the original design problem in~\eqref{prob_bin} is a mixed optimization problem with both discrete and continuous decision variables. The binary variable $b_n$ determines the relay's operating mode, while the continuous variables include the the HAP's beamforming $({\bf w}_0, {\bf w}_1)$ and time allocation strategy, as well as the passive relays' reflecting phases ${\bm \theta}= [\theta_1, \theta_2, \ldots, \theta_N]^T$. The basic idea of the hierarchical learning framework is to split the action vector into two parts $({\bf b}_t, {\bf a}_t^c)$. The discrete action ${\bf b}_t = [b_1, b_2, \ldots, b_N]^T $, indicating the relays' operating modes, can be learnt following the conventional DQN algorithm. Given the discrete action ${\bf b}_t$ in the outer-loop DQN algorithm, we can focus on the solution ${\bf a}_t^c = (t, {\bf w}_0, {\bf w}_1, {\bm \theta})$ to a continuous optimization problem in~\eqref{prob_ts}, which can be considered in the DDPG framework. Such a hierarchical DDPG (denoted as H-DDPG) framework allows us to decompose the combinatorial and discrete optimization of the relays' radio modes from the optimization of other continuous variables. The convergent value function of the inner-loop DDPG algorithm can be viewed as the $Q$-value of the outer-loop DQN. The benefit of the proposed H-DDPG framework mainly lies in that it reduces the action space, and thus it is expected to improve the learning efficiency.

\subsubsection{Optimization-driven DDPG Algorithm}
Besides the hierarchical design, we also upgrade the inner-loop DDPG algorithm for solving the continuous control problem in~\eqref{prob_ts}. Note that the conventional DDPG algorithm is subject to slow convergence speed due to random initialization of the double $Q$-networks. To improve the learning efficiency, we propose a model-based optimization method to provide a better-informed target value for DNN training. Specifically, in the $t$-th decision epoch, the actor network outputs an action ${\bf a}_t^c = (t,{\bf w}_0, {\bf w}_1 ,{\bm \theta})$ and the target $Q$-network produces the target value $y_t$. Note that $y_t$ can be very different from its optimal value. We employ the optimization framework developed in Section~\ref{sec_H-DDPG} to approximate problem in~\eqref{prob_ts}, which can be used to estimate a lower bound on the target value $y_t$. Let $y_t'$ denote the optimization-driven target value and ${\bf a}_t^o = (t',{\bf w}_0', {\bf w}_1' ,{\bm \theta}')$ denote the approximate solution in the optimization framework. We can expect that the model-based optimization can provide a more accurate estimation of the target $Q$-value compared to the DNN generated target value $y_t$, especially in the early stage of learning. A better-informed target value $y_t'$ can help the inner-loop DDPG algorithm adapt faster and achieve a better reward performance. Moreover, the derivation of the optimization-driven target value $y_t'$ is irrelevant with the online $Q$-network. This implies that $y_t'$ is more stable than the output $y_t$ from the target $Q$-network. Such a decoupling between the online $Q$-network and its target is expected to reduce the performance fluctuations and stabilize the learning performance in a shorter training time.

\begin{figure}[t]
\centering
\includegraphics[width=0.95\textwidth]{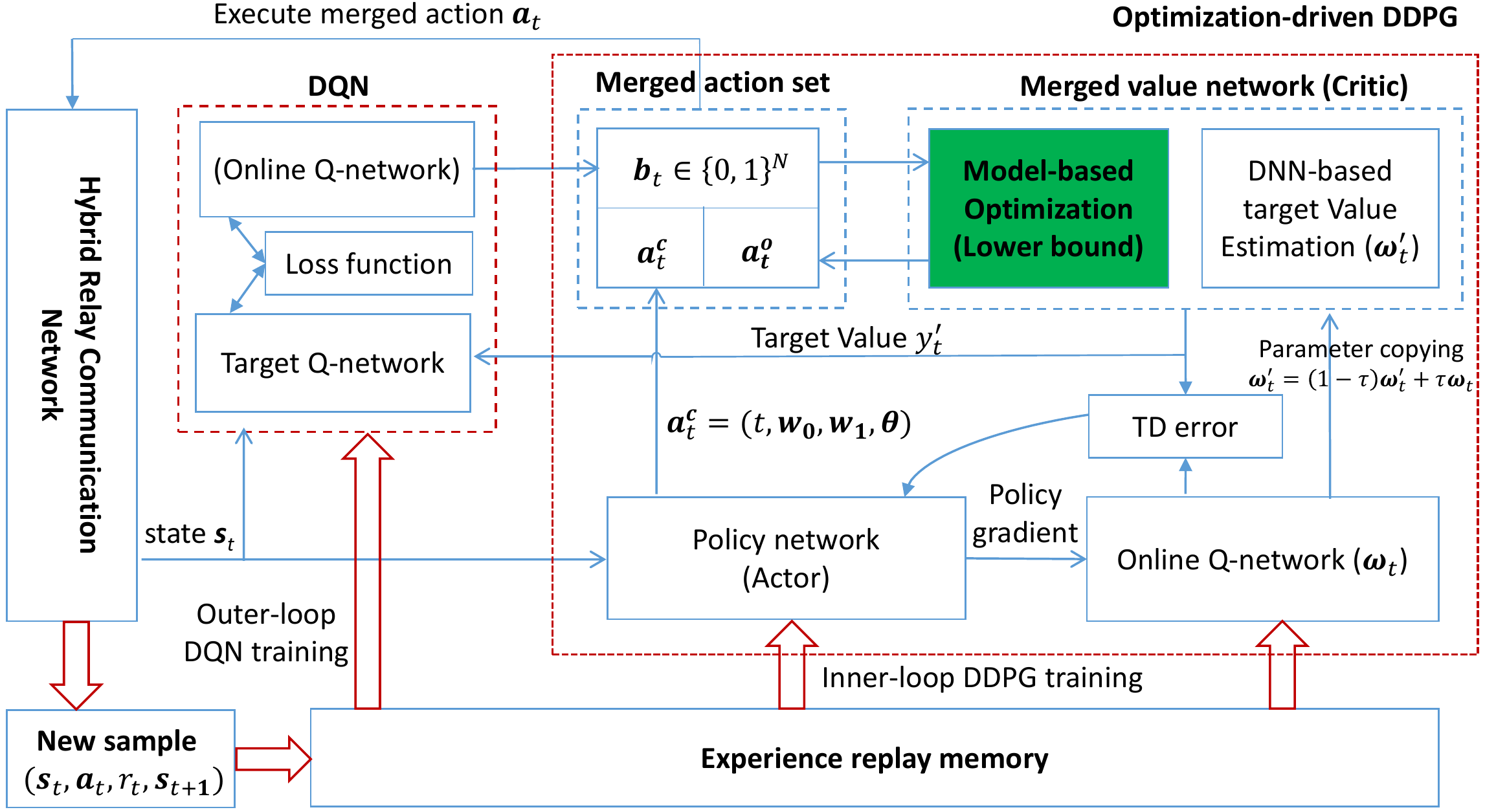}
\caption{The optimization-driven hierarchical DDPG (H-DDPG) framework for hybrid relaying communications.}\label{fig:H-DDPG}
\end{figure}

Figure~\ref{fig:H-DDPG} shows the main building blocks of the proposed optimization-driven H-DDPG algorithm. The complete framework contains two parts, the outer-loop DQN and the inner-loop model-based DDPG. Given the system state, the outer-loop DQN firstly decides the radio mode ${\bf b}_t$ for each relay, and then the channel conditions can be updated according to the relays' radio modes. Based on the channel conditions and the transmission performance, the actor and critic networks of the inner-loop DDPG algorithm generate the continuous action ${\bf a}_t^c$ and the value estimate, respectively. The action ${\bf a}_t^c$ can be merged with the optimized solution ${\bf a}_t^o$ based on the values of $y_t$ and $y_t'$. In the simplest case, we use $y_t'$ as the target value for DNN training if $y_t' > y_t$ and meanwhile accept the new action~${\bf a}_t = ({\bf b}_t,{\bf a}^o_t)$ instead of ${\bf a}_t = ({\bf b}_t,{\bf a}^c_t)$. Note that we may also have $y_t' < y_t$ when the learning becomes more stable. As such, we follow exactly the output of the actor network. When the inner-loop DDPG becomes stable, the target value $y'_t$ of the DDPG algorithm is also used as a reference or benchmark for the DQN algorithm to improve its learning speed. The detailed algorithm is illustrated in Algorithm~\ref{alg:hddpg}.

\begin{algorithm}[t]
    \caption{Model-based H-DDPG Algorithm for Hybrid Relaying Communications}\label{alg:hddpg}
    \begin{algorithmic}[1]
        \State Randomly initialize $Q$-networks of DQN and DDPG
        \State Initialize the replay buffer for DQN and DDPG
        \State \textbf{DQN loop}
        \State \hspace{5mm} Select the relays' radio modes ${\bf b}$ by DQN algorithm
        \State \hspace{5mm} \textbf{DDPG loop}
        \State \hspace{10mm} Generate continuous action ${\bf a}_t^c$ by DDPG algorithm
        \State \hspace{10mm} Randomly exploit action ${\bf a}_t^c \leftarrow {\bf a}_t^c + \mathcal{N}_0$
        \State \hspace{10mm} Evaluate the reward $v_t$ of the action ${\bf a}_t^c$
        \State \hspace{10mm} Update $y_t'$ and ${\bf a}_t^o$ by the optimization method
        \State \hspace{10mm} \textbf{If} $y_t' > y_t$
        \State \hspace{15mm} Accept optimization-driven target $y_t\leftarrow y_t'$
        \State \hspace{15mm} Adopt the merged action ${\bf a}_t \leftarrow ({\bf b}_t , {\bf a}_t^o )$
        \State \hspace{10mm} \textbf{END If}
        \State \hspace{10mm} Buffer transition sample in DDPG memory
        \State \hspace{10mm} Sample a minibatch from DDPG memory
        \State \hspace{10mm} Update critic and actor networks
        %\State \hspace{10mm} Soft update of DDPG target networks
        \State \hspace{5mm} \textbf{END DDPG loop}
        \State \hspace{5mm} Sample a minibatch from DQN memory
        \State \hspace{5mm} Update DQN network and its target network
        \State \textbf{END DQN loop}
    \end{algorithmic}
\end{algorithm}

\subsection{Further Discussions}\label{subsec_dis}

The optimization module in Fig.~\ref{fig:H-DDPG} aims to provide a lower bound estimation of the target $Q$-value, based on incomplete or inaccurate system information. For example, we can explore the physical connections between different control variables. The control variables of a complex problem are usually high dimensional and untractable jointly by optimization methods. However, given one part of the control variable, we can estimate the other part efficiently by solving an approximate and usually convex optimization problem. This approximate optimization problem can be built based on the physical connections between different control variables.

We require that the optimization solution in the learning framework does not need to be accurate, but to be very efficient. The accuracy of the solution usually means more complicated computations and a lot of effort in algorithm design, which may delay the learning in each episode. Therefore, besides the monotonic optimization for problem~\eqref{prob_trans_ts}, we also try to approximate problem~\eqref{prob_trans_ts} in a much simplified case with a fixed time variable $t$. In particular, when the inner-loop DDPG algorithm generates the action ${\bf a}_t^c = (t,{\bf w}_0, {\bf w}_1 ,{\bm \theta})$, the optimization module in Fig.~\ref{fig:H-DDPG} takes the variable $t$ as input and optimizes the other part of the control variables $({\bf w}_0, {\bf w}_1 ,{\bm \theta})$. With fixed $t$, the optimization problem in~\eqref{prob_trans_ts} becomes an SDP in the form of problem~\eqref{prob_feas}, which can be solved very efficient by the interior-point algorithms~\cite{rankluo}. Note that the monotonic optimization algorithm requires an iterative procedure and the SDP in~\eqref{prob_feas} will be solved multiple times in each iteration. Therefore, the simplified case with fixed $t$ can significantly reduce the time spent on the inner-loop optimization, therefore improve the overall learning efficiency. Our numerical results in Section~\ref{sec_sim} reveal that such a model-simplified H-DDPG algorithm still outperforms the conventional model-free H-DDPG algorithm significantly. Most importantly, it has very close performance as that of the H-DDPG algorithm driven by monotonic optimization. This verifies that the our learning framework is robust to the optimization methods (i.e.,~insensitive to different accuracies), and can achieve significant performance gain when partial system information is considered in the inner-loop optimization.

The practical implementation of the proposed optimization-driven H-DDPG algorithm follows the system structure as shown in Fig.~\ref{fig:topo}. The hierarchial learning agent is deployed at the HAP, which makes decisions based on the perceived transmission performance, the relays' channel and energy conditions. The channel conditions can be viewed as block fading and kept constant during each transmission period~\cite{gsmwcnc19}. In each decision epoch, the HAP and the agent collect all channel information and the relays' energy status to adapt the beamforming and relaying strategies, which will be distributed to the relays via a downlink control channel. The estimation of the throughput performance is based on the feedback information from the receiver. The HAP with more computational resources also implements an optimization module that estimates the optimal strategy and the performance lower bound based on incomplete system information.

\begin{figure*}[t]
    \centering
    \includegraphics[width=0.95\textwidth]{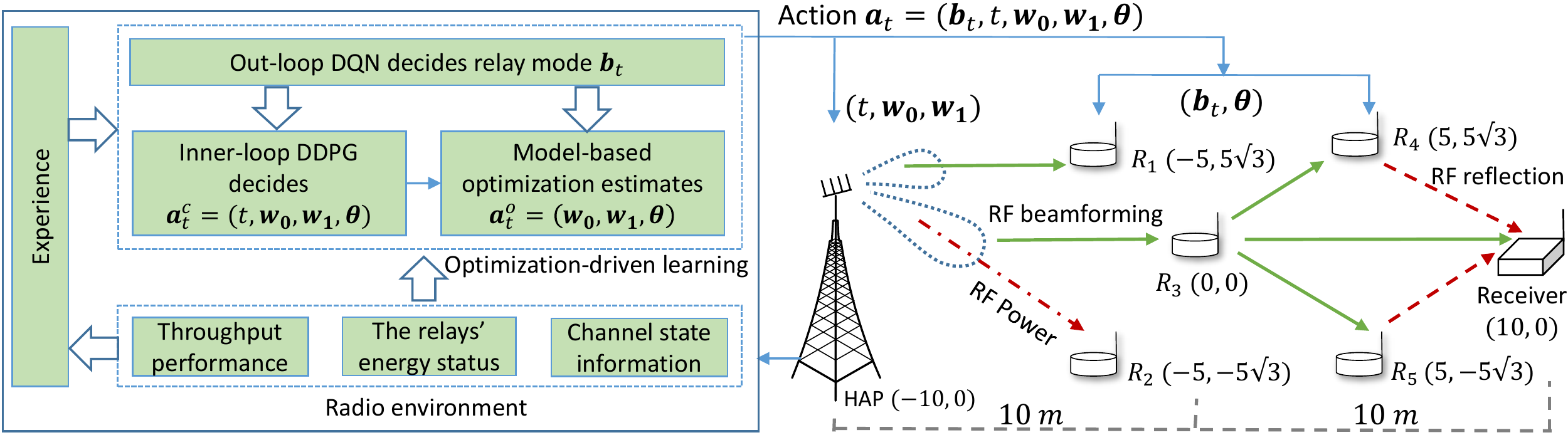}
    \caption{Simulation system.}\label{fig:topo}
\end{figure*}

\section{Numerical Results}\label{sec_sim}

%We evaluate the proposed model-based H-DDPG algorithm thoroughly via extensive simulations in this section.

In the simulation, we consider a two-hop relay-assisted transmission system as shown in Fig.~\ref{fig:topo}. We consider 3 antennas at the HAP and 5 relays in the system. The location of each node is depicted in Fig.~\ref{fig:topo}. The hybrid relays can harvest RF energy from the HAP's signal beamforming. The energy harvesting efficiency is set to $0.6$. The HAP's transmit power ranges from -10 dBm to 10 dBm. We consider a log-distance path loss model. The path loss at unit distance is 25 dB and the path loss exponent is set to $2$. The noise power is $-80$ dBm. The reflection coefficients of the passive relays are set to $0.5$. A similar setting has been used in~\cite{hddpg}.

\begin{figure}[t]
    \centering
    \subfloat[Reward performance]{\includegraphics[width=\singlesize\textwidth]{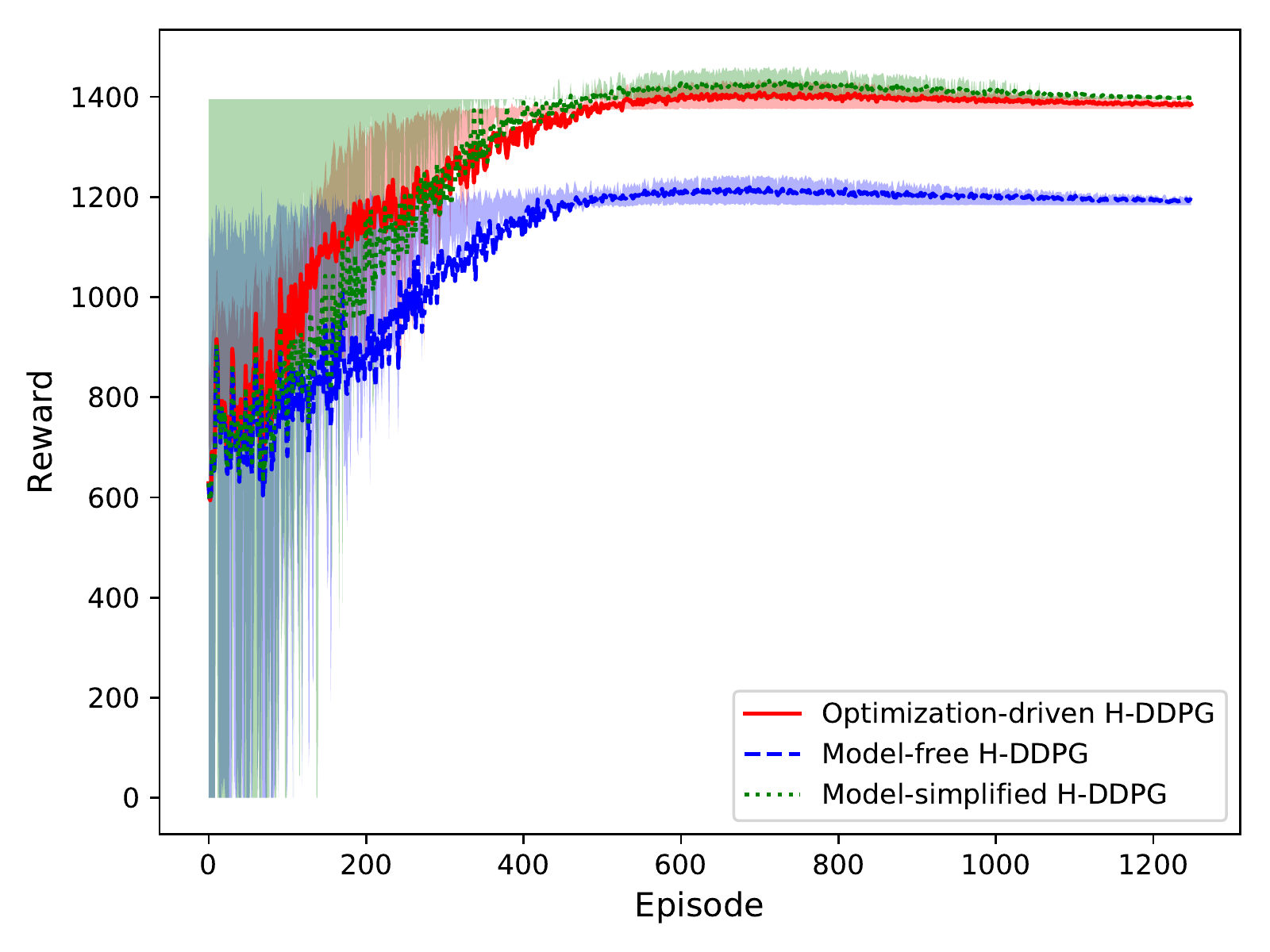}}\\
    \subfloat[Stability in learning]{\includegraphics[width=\singlesize\textwidth]{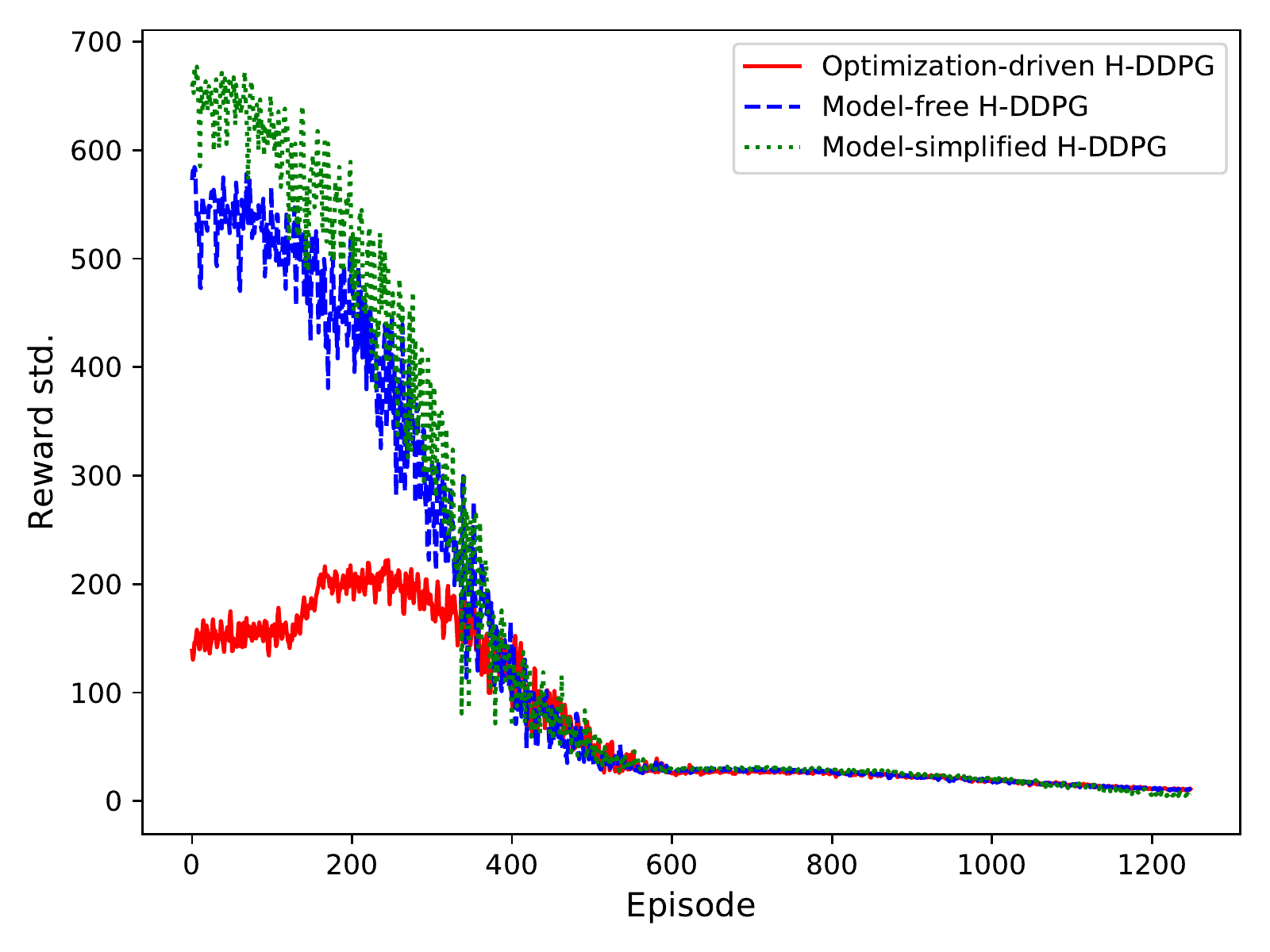}}
    \caption{The comparisons of convergence speed, reward performance, and stability in different H-DDPG algorithms.}\label{fig:drl_compare}
\end{figure}

\subsection{Convergence of the Optimization-driven H-DDPG Algorithm}
We firstly compare the reward performance of the optimization-driven H-DDPG algorithm with that of the conventional model-free H-DDPG algorithm, as shown in Fig.~\ref{fig:drl_compare}(a). The solid lines represent the averaged rewards over $80$ episodes and the shaded areas denote the rewards' fluctuating ranges. The optimization-driven H-DDPG algorithm tries to improve the learning efficiency by using the monotonic optimization method to find a performance lower bound on problem~\eqref{prob_bin} as a better-informed estimation of the target $Q$-value. Based on the discussion in Section~\ref{subsec_dis}, we also implement a model-simplified H-DDPG algorithm that optimizes a part of the control variables $({\bf w}_0, {\bf w}_1 ,{\bm \theta})$ based on the time variable $t$ generated by the inner-loop DDPG algorithm. As such, the optimization module can generate a performance lower bound more efficiently with reduced computational complexity. Table~\ref{tab:exec-time} shows the comparison of the average run time per learning episode in different H-DDPG algorithms. It is clear that the model-simplified H-DDPG algorithm improves the learning efficiency more than 40 times compared to the H-DDPG algorithm driven by the monotonic optimization method. Compared to the model-free H-DDPG algorithm, the average run time is almost tripled in the model-simplified H-DDPG algorithm, which however also achieves a much higher reward performance as shown in Fig.~\ref{fig:drl_compare}(a).

\begin{table}[t]
    \centering
    \caption{Average run time in each episode (in milliseconds)}\label{tab:exec-time}
    \begin{tabular}{|c|r|r|r|r|r|}
    \hline
        Number of hybrid relays & 5 & 4 & 3 & 2 & 1 \\ \hline
        Model-free H-DDPG & 1.25 & 1.24 & 1.22 & 1.20 & 1.19 \\ \hline
        Model-simplified H-DDPG & 5.08 & 4.05 & 3.92 & 3.67 & 3.25 \\ \hline
        Optimization-driven H-DDPG & 227.64 & 214.15 & 199.39 & 185.32 & 168.34 \\ \hline
    \end{tabular}
\end{table}

Though the rewards in three H-DDPG algorithms fluctuate over episodes due to random explorations, finally they converge to stable values after learning from the accumulated past experience. One interesting observation is that the optimization-driven H-DDPG shows a higher convergent reward, which is up to 20\% compared to that of the model-free H-DDPG algorithm. This verifies the advantages of the proposed algorithm. Besides, we observe that the shaded area of the model-based H-DDPG is smaller than that of model-free H-DDPG. This implies that the model-based optimization-driven H-DDPG has a more stable learning performance. This can be further verified in Fig.~\ref{fig:drl_compare}(b) where we show the standard deviation (std.) of the rewards for two H-DDPG algorithms. It is clear that the reward of the optimization-driven H-DDPG has less fluctuations compared to that of the model-free H-DDPG. Another interesting observation is that the variance of the optimization-driven H-DDPG firstly increases and then drops significantly to its minimum. The reason is that the optimization-driven H-DDPG tends to choose the actions given by the optimization module in the early stage of learning. The optimization-driven target provides an independent estimation of the reward and thus has less variance over different episodes. After that, H-DDPG can learn from past experiences and achieve a higher reward, which results in a higher variance due to random exploration.

Comparing two model-based H-DDPG algorithms, we observe that the model-simplified H-DDPG achieves very close performance to that of the optimization-driven H-DDPG algorithm. However, they have very different variances in the learning curves. As shown in Fig.~\ref{fig:drl_compare}(b), the model-simplified H-DDPG has a higher variance in the reward performance, especially in the early stage of the learning. One possible reason is that the optimization in the model-simplified H-DDPG is based on an inexact time allocation $t$ generated by the inner-loop DDPG algorithm. Due to random exploration in the early stage, the time variable $t$ can change intensively and lead to very different optimization-driven target values. %For example, the estimated throughput goes to $0$ for $t=0$, while it increases significantly when $t$ is close to $0.5$.

\begin{figure}[t]
    \centering
    \subfloat[Throughput performance]{\includegraphics[width=\singlesize\textwidth]{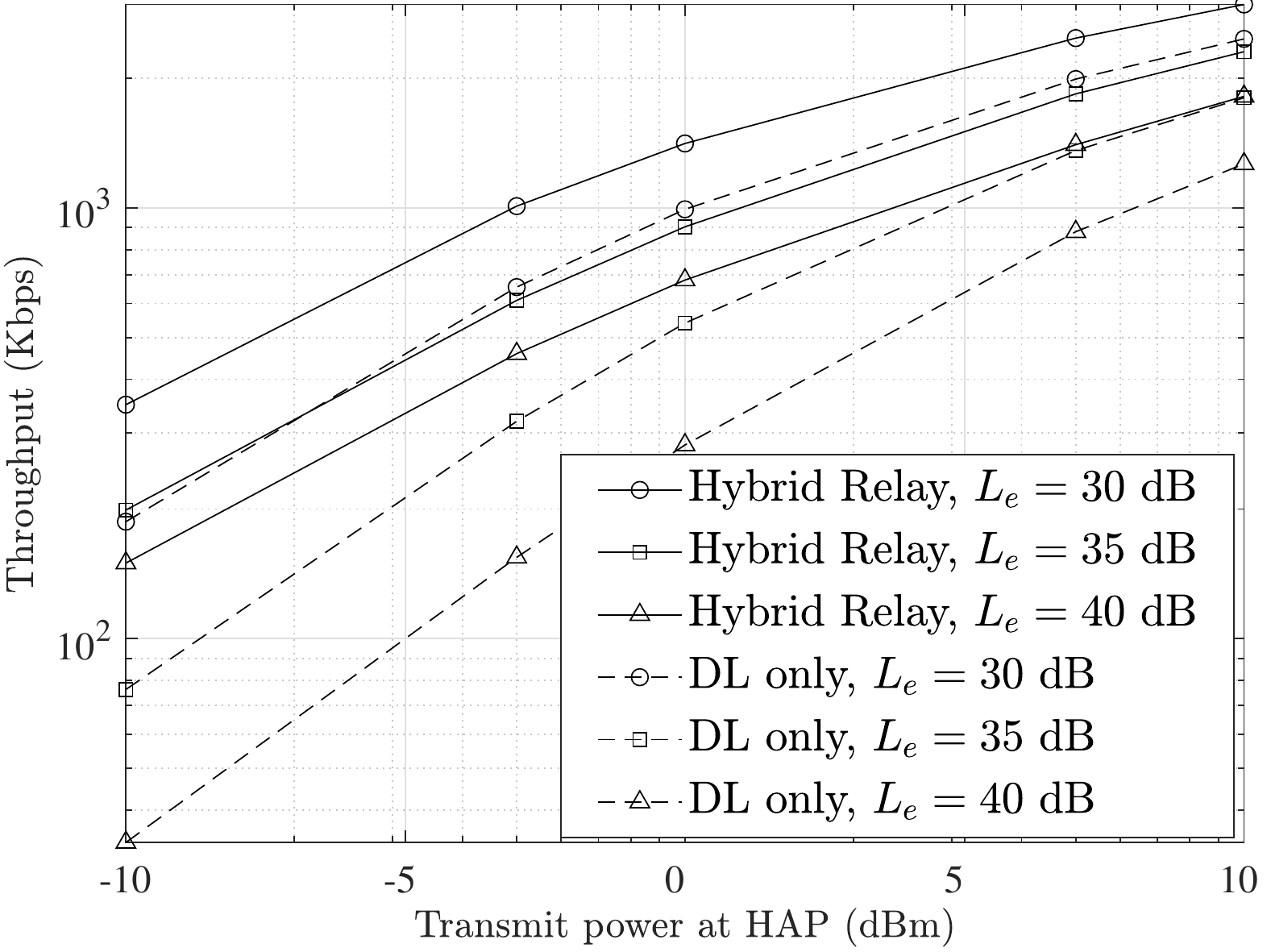}}\\
    \subfloat[Transmission time]{\includegraphics[width=\singlesize\textwidth]{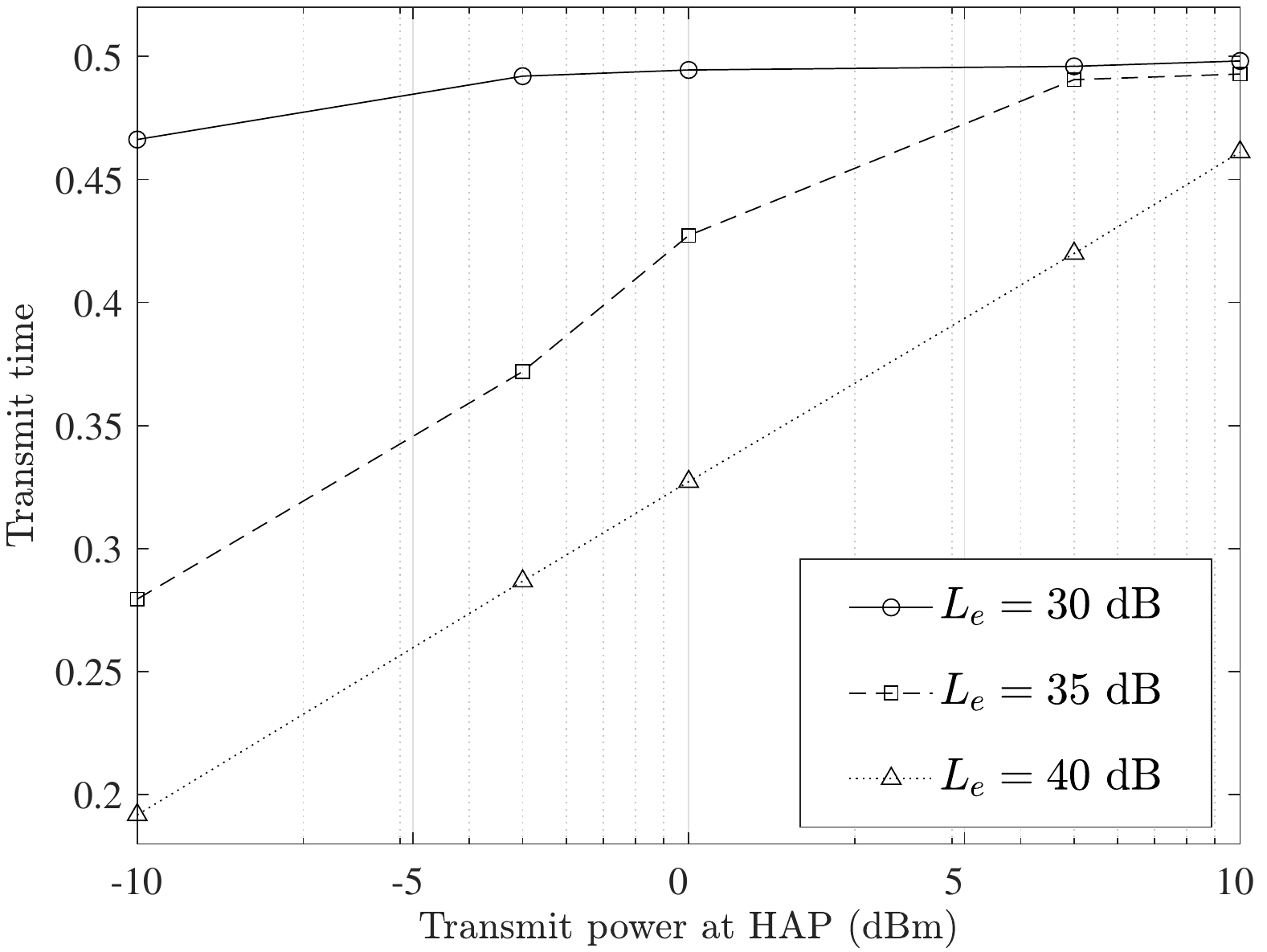}}
    \caption{Hybrid relays contribute more with a weak DL channel and a limited transmit power at the HAP.}\label{fig:dl-relay}
\end{figure}

\subsection{Throughput via Direct Link and Hybrid Relays}

In this part, we evaluate the optimal relay throughput with different transmit power at the HAP. For a fair comparison, we also show the optimal throughput when the direct link (DL) is available only. We consider an additional attenuation $L_e$ in the DL channel to indicate its channel quality. A higher value of $L_e$ implies a weak DL channel possibly due to the blockage of surrounding objects. A common observation in Fig.~\ref{fig:dl-relay}(a) is that the throughput increases with the HAP's transmit power. For a weaker DL channel, we observe that the relay-assisted throughput increases more significantly compared to that achievable via the DL only. In particular, given $L_e=40$ dB in the DL channel, the hybrid relay-assisted transmission provides $345\%$ higher throughput than that achievable via the DL channel when $p_t = -10$ dBm, which reduces to $43.48\%$ when the HAP's transmit power increases to $p_t=10$ dBm. The reason is that the hybrid relays will contribute little to the overall throughput when a strong DL channel can achieve high throughput performance. On the other hand, when the DL channel becomes worse off or the HAP's transmit power is limited, the proposed hybrid relay-assisted transmission can improve significantly the overall throughput and energy efficiency. Fig.~\ref{fig:dl-relay}(b) shows that the optimal transmission time also increases in the HAP's transmit power. This means that the HAP with a higher transmit power can schedule less channel time for wireless power transfer to the relays. With fixed transmit power at the HAP, we observe that the quality of DL channel also affects the relays' optimal energy harvesting time. In particular, with a weak DL channel, e.g.,~$L_e=40$ dB, the HAP spares more time to power the hybrid relays, as shown in Fig.~\ref{fig:dl-relay}(b). When the relays have more power, the relay transmission can contribute more to the overall throughput.

\begin{figure}[t]
    \centering
    \includegraphics[width=\singlesize\textwidth]{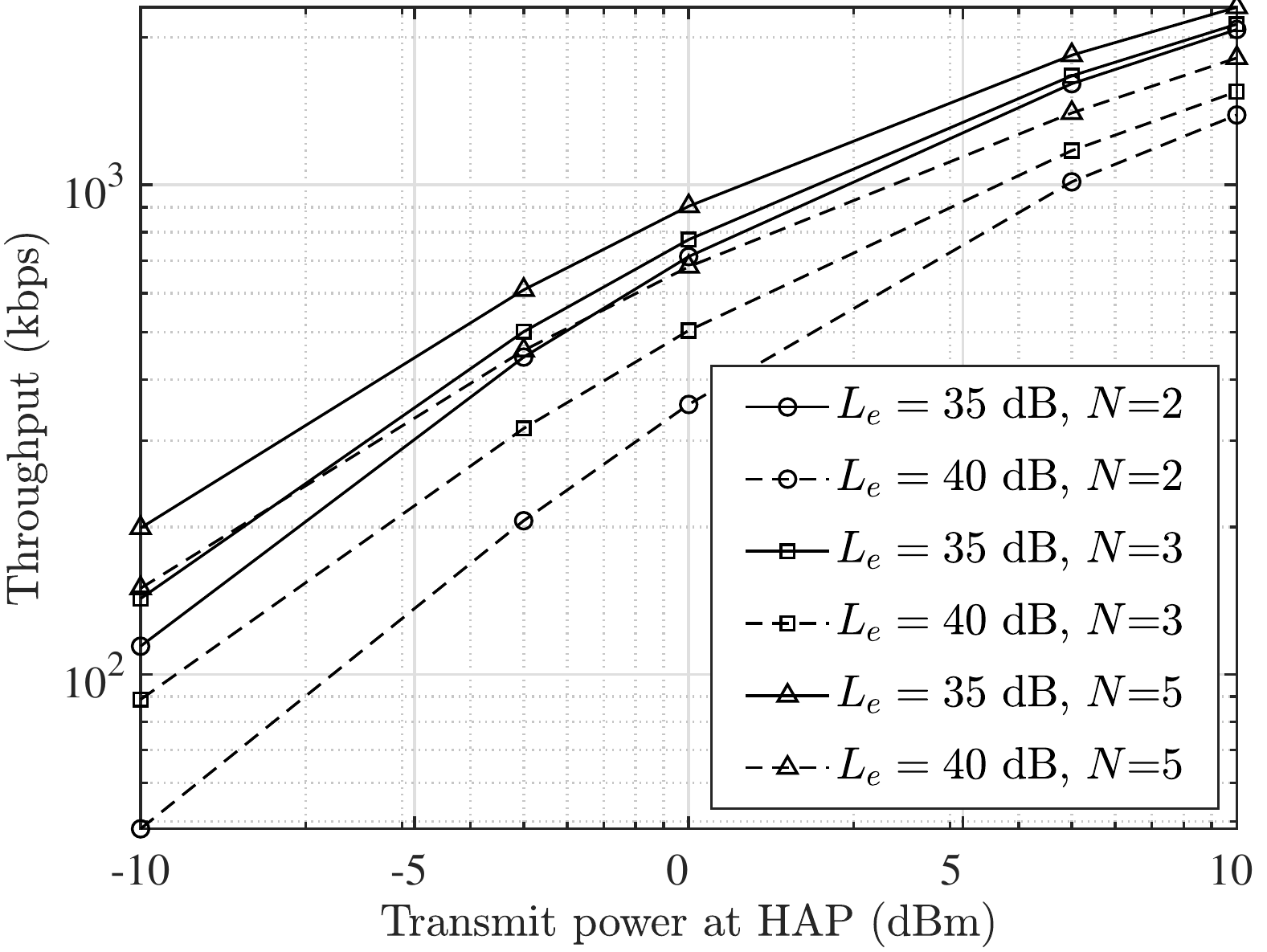}
    \caption{More relays can compensate the throughput loss in a weak DL channel.}\label{fig:relays_compensate}
\end{figure}

In Fig.~\ref{fig:relays_compensate}, we show the tradeoff between DL and relay transmissions as the number of hybrid relays increases. We can observe that more relays can compensate the throughput loss when the DL channel becomes worse off, especially for low transmit power at the HAP. For example, when the HAP's transmit power is $-10$ dBm, we can achieve a higher throughput in the case with $L_e = 40$ and $N=5$, comparing to the case with $L_e = 35$ and $N=2$ or $N=3$. This implies that the joint effort of more relays can contribute even better than a strong DL channel.

\subsection{Performance Comparison of Different Beamforming Schemes}

\begin{figure}[t]
    \centering
    \subfloat[Throughput comparison]{\includegraphics[width=\singlesize\textwidth]{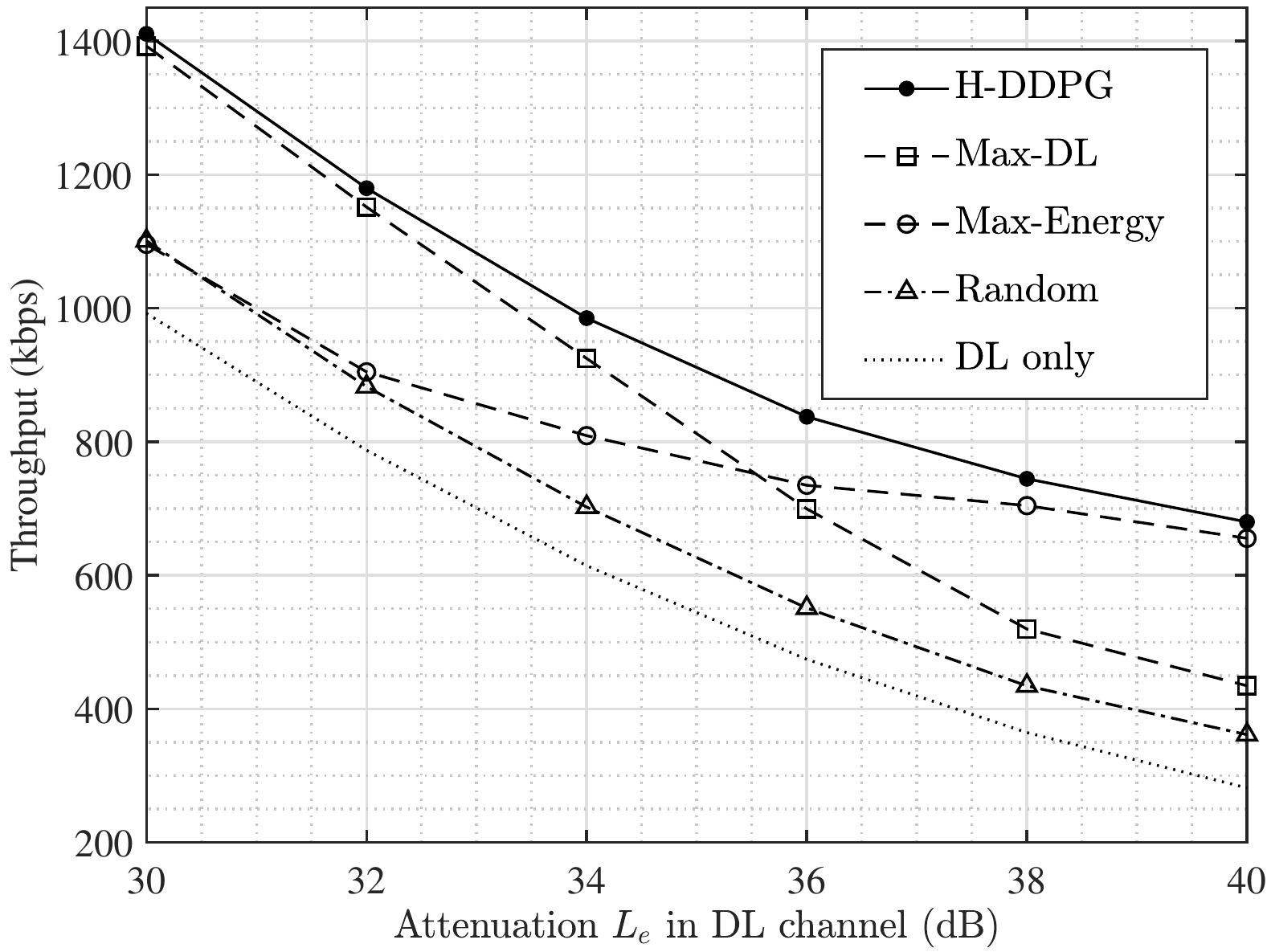}}\\
    \subfloat[Relay mode selection]{\includegraphics[width=\singlesize\textwidth]{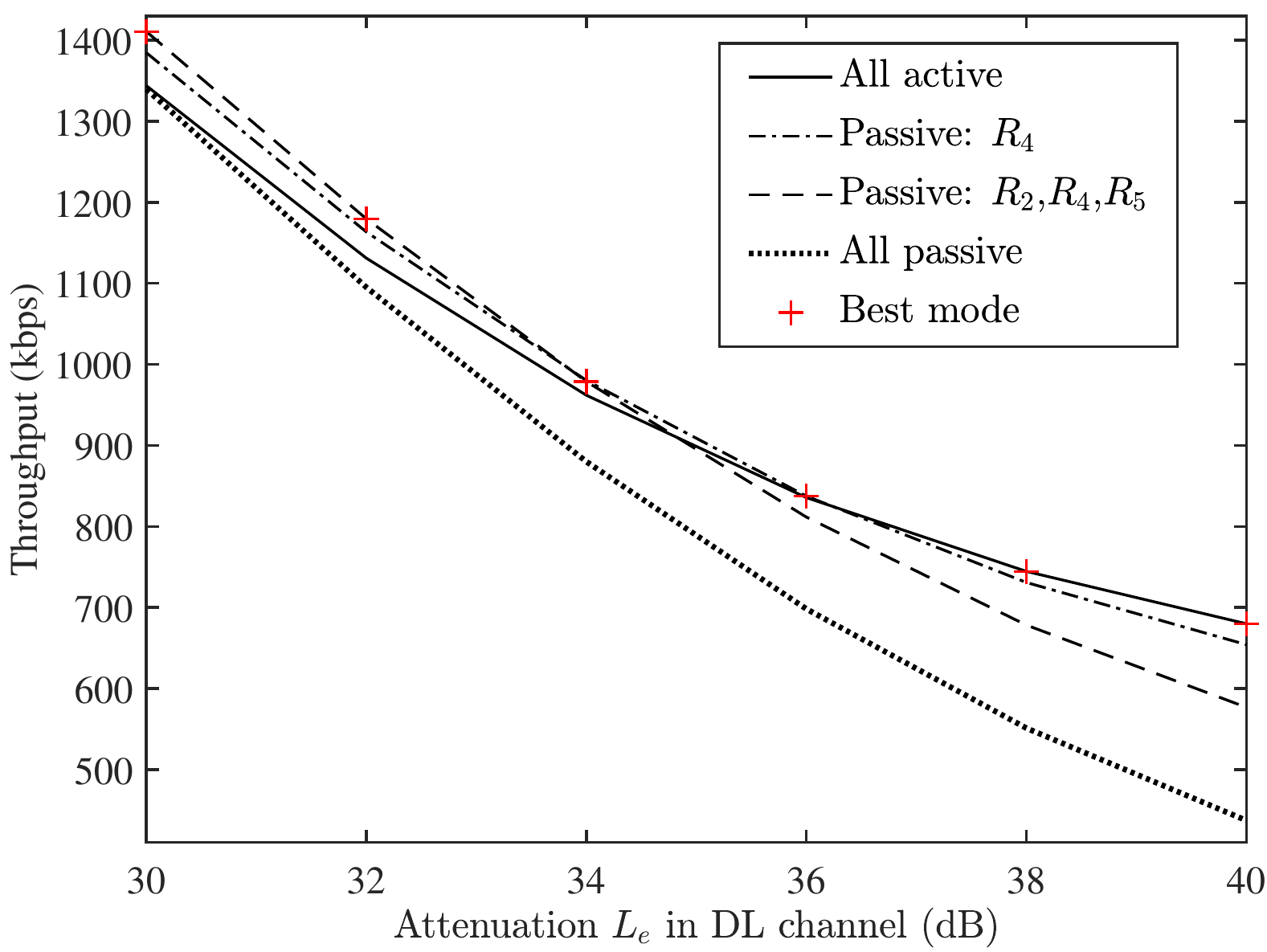}}
    \caption{The optimal throughput and the relays' mode selection change with the DL channel quality.}\label{fig:relay}
\end{figure}

In this part, we compare the optimal throughput achievable by the model-based H-DDPG algorithm with a few baselines as shown in Fig.~\ref{fig:relay}(a). To ensure a fair comparison under different channel conditions, we consider different DL channel qualities by varying the attenuation $L_e$ from $30$ dB to $40$ dB. Higher attenuation implies a weaker DL channel. In the comparison, we also implement three other beamforming algorithms: 1) The simplest random scheme allows the HAP to set its beamforming vector randomly, 2) Max-DL scheme aligns the HAP's beamformer to the DL channel and thus enhances direct information transmissions, 3) Max-Energy scheme adopts the optimal beamformer that maximizes the energy transfer to the relays, and also uses the same beamformer for information transmission in the second phase. As shown in Fig.~\ref{fig:relay}(a), the model-based H-DDPG achieves the optimal throughput performance. The Max-DL scheme achieves a similar throughput performance as that of optimal scheme when the DL channel is relatively good, i.e., the attenuation $L_e$ is small, while the performance gap becomes significant with the increase of $L_e$. On the other hand, the Max-Energy scheme shows a very different result. In particular, the performance gap between the Max-Energy scheme and the optimal scheme decreases with the increase of $L_e$. These observations imply that the model-based H-DDPG algorithm can adapt the HAP's beamforming according to different DL channel qualities. We also compare these beamforming schemes to the degenerated case with DL only. The results in Fig.~\ref{fig:relay}(a) show that the random beamforming scheme for the relay-assisted network even achieves a higher throughput than that achievable via the DL only. This verifies that the hybrid relays can provide the users' cooperation gain for information transmissions.

\begin{table}[t]
    \centering
    \caption{The hybrid relays' mode selection with $N=5$ relays}\label{tab:mode}
    \begin{tabular}{|c|c|c|c|c|}
    \hline
        DL channel quality & $p_t = -10$ dBm & $p_t = 0$ dBm & $p_t = 10$ dBm  \\ \hline
        $L_e = 30$ dB & $P_4$, $P_5$ & $P_2$, $P_4$, $P_5$ & $P_2$, $P_4$, $P_5$  \\ \hline
        $L_e = 35$ dB & A & $P_4$ & $P_2$, $P_4$, $P_5$  \\ \hline
        $L_e = 40$ dB & A & A & $P_4$, $P_5$  \\ \hline
    \end{tabular}
    \vspace{5mm}
    \caption{The hybrid relays' mode selection with $L_e=35$ dB}\label{tab:numR}
    \begin{tabular}{|c|c|c|c|}
    \hline
        Relay deployment & $p_t = -10$ dBm & $p_t = 0$ dBm & $p_t = 10$ dBm \\ \hline
        Relay-$1$ to Relay-$2$ & $P_2$ & $P_2$ & $P_2$ \\ \hline
        Relay-$1$ to Relay-$3$ & A & A & $P_1$, $P_2$ \\ \hline
        Relay-$1$ to Relay-$5$ & A & $P_4$ & $P_2$, $P_4$, $P_5$ \\ \hline
    \end{tabular}
\end{table}
\subsection{Hybrid Relays' Optimal Mode Selection}

Fig.~\ref{fig:relay}(b) shows the change of the relays' mode selection with different DL channel qualities. It is clear that the hybrid relay system can achieve the optimal throughput performance by allowing individual relays to optimize their operating modes. Moreover, we observe that more relays will switch to the active mode when the DL channel becomes worse off. This is further verified in Table~\ref{tab:mode}, where we enumerate the relays' optimal mode selection when the DL channel becomes worse off. We use 'A' in Table~\ref{tab:mode} to denote all active relays and $P_n$ denotes the passive mode of relay-$n$. For example, it becomes optimal for all relays to choose the active mode when $L_e=40$ dB. On the contrary, more relays will work in the passive mode when the DL channel becomes good. In this case, the passive relays prefer to assist DL transmissions and save more channel time for active RF transmissions. In Table~\ref{tab:numR}, we show the relays' mode selection with different HAP's transmit power and relays' deployment scheme. It is clear that the relays' optimal operating modes also change with the HAP's transmit power. An interesting finding is that the relays' deployment also affects their optimal radio mode. That is because the optimization-driven H-DDPG algorithm can capture the couplings between different relays and adapt their radio modes accordingly to maximize the overall throughput.

\section{Conclusions}\label{sec_con}

In this paper, we propose a novel optimization-driven hierarchical DRL approach to solve the throughput maximization problem involving both active and passive relays. This approach integrates DQN and model-based optimization methods into the conventional DDPG algorithm in a hierarchical learning framework. The model-based optimization can help derive a performance lower bound based on incomplete system information, which provides a better-informed target $Q$-value estimation for the online $Q$-network to follow and adapt in the learning process. We then deploy this learning framework in a multi-relay-assisted hybrid radio network to maximize its throughput performance. Simulation results verify that the proposed algorithm outperforms the model-free H-DDPG algorithm in terms of robustness and learning efficiency.

\bibliographystyle{IEEEtran}

\bibliography{ref}

\end{document}